\DeclareMathSymbol{\shortminus}{\mathbin}{AMSa}{"39}
\theoremstyle{plain}
\newtheorem{thm}{Theorem}
\newtheorem*{thm*}{Theorem}
\newtheorem{fakt}{Fact}
\newtheorem{assu}[thm]{Assumption}
\theoremstyle{definition}
\theoremstyle{remark}
\begin{document}
\definecolor{black}{rgb}{0,0,0}

\newcommand{\B}{{\mathsf{B}}}
\newcommand{\T}{{\mathsf{T}}}


\newcommand{\beq}[0]{\begin{equation}}
\newcommand{\eeq}[0]{\end{equation}}
\newcommand{\srn}[2]{\big\langle #1 \big\rangle_{#2}}
\newcommand{\outerp}[2]{\ket{#1}\! \bra{#2}}

\def\ra{\rangle}
\def\la{\langle}

\newcommand{\one}{\leavevmode\hbox{\small1\normalsize\kern-.33em1}}
\def\tr{\mbox{tr}}
\def\compl{\mathbb{C}}

\font\Bbb =msbm10 \font\eufm =eufm10
\def\Real{{\hbox{\Bbb R}}} \def\C{{\hbox {\Bbb C}}}
\def\LRA{\mathop{-\!\!\!-\!\!\!\longrightarrow}\nolimits}
\def\LRAA{\mathop{\!\longrightarrow}\nolimits}

\def\be{\begin{equation}}
\def\ee{\end{equation}}
\def\ben{\begin{eqnarray}}
\def\een{\end{eqnarray}}
\def\eea{\end{array}}
\def\bea{\begin{array}}
\newcommand{\ot}[0]{\otimes}
\newcommand{\bei}{\begin{itemize}}
\newcommand{\eei}{\end{itemize}}
\newcommand{\wektor}[1]{\boldsymbol{#1}}
\newcommand{\proj}[1]{\ket{#1}\!\!\bra{#1}}
\newcommand{\Ke}[1]{\big|#1\big\rangle}
\newcommand{\Br}[1]{\big< #1\big|}
\newcommand{\mo}[1]{\left|#1\right|}
\newcommand{\nms}{\negmedspace}
\newcommand{\nmss}{\negmedspace\negmedspace}
\newcommand{\nmsss}{\negmedspace\negmedspace\negmedspace}
\newcommand{\norsl}[1]{\left\|#1\right\|_{\mathrm{1}}}
\newcommand{\norhs}[1]{\left\|#1\right\|_{\mathrm{2}}}
\newcommand{\ke}[1]{|#1\rangle}
\newcommand{\br}[1]{\langle #1|}
\newcommand{\kett}[1]{\ket{#1}}
\newcommand{\braa}[1]{\bra{#1}}
\def\ra{\rangle}
\def\la{\langle}
\def\blacksquare{\vrule height 4pt width 3pt depth2pt}
\def\dcal{{\cal D}}
\def\pcal{{\cal P}}
\def\hcal{{\cal H}}
\def\A{{\cal A}}
\def\B{{\cal B}}
\def\E{{\cal E}}
\def\trace{\mbox{Tr}}
\newtheorem{theo}{Theorem}
\newtheorem{conj}[theo]{Conjecture}

\newcommand{\R}{\mathbb{R}}
\newcommand{\I}{\mathbbm{1}}
\newcommand{\w}{\omega}
\newcommand{\p}{\Vec{p}}
\newcommand{\dl}[1]{\left|\left|#1\right|\right|}

\renewcommand{\emph}[1]{\textbf{#1}}
\newcommand{\emphalt}[1]{\textit{#1}}

\newcommand{\eg}{{\it{e.g.~}}}
\newcommand{\ie}{{\it{i.e.~}}}
\newcommand{\etal}{{\it{et al.~}}}


\title{Network quantum steering enables randomness certification
without seed randomness}
\author{Shubhayan Sarkar}
\affiliation{Laboratoire d’Information Quantique, Université libre de Bruxelles (ULB), Av. F. D. Roosevelt 50, 1050 Bruxelles, Belgium}
\orcid{0000-0001-5833-4466}

\begin{abstract}
   Quantum networks with multiple sources allow the observation of quantum nonlocality without inputs. Consequently, the incompatibility of measurements is not a necessity for observing quantum nonlocality when one has access to multiple quantum sources. Here we investigate the minimal scenario without inputs where one can observe any form of quantum nonlocality. We show that even two parties with two sources that might be classically correlated can witness a form of quantum nonlocality, in particular quantum steering, in networks without inputs if one of the parties is trusted, that is, performs a fixed known measurement. We term this effect as swap-steering. The scenario presented in this work is minimal to observe such an effect. Consequently, a scenario exists where one can observe quantum steering but not Bell non-locality. We further construct a linear witness to observe swap-steering. Interestingly, this witness enables self-testing of the quantum states generated by the sources and the local measurement of the untrusted party. This in turn allows certifying two bits of randomness that can be obtained from the measurement outcomes of the untrusted device without the requirement of initially feeding the device with randomness.
\end{abstract}

\section{Introduction} 
Quantum nonlocality is one of the most remarkable features of quantum mechanics that defy our classical intuitions about the world. It refers to the property of quantum particles to exhibit correlations that seem to occur instantaneously even when they are separated by large distances. This quantum property was first conceptualized in the celebrated work of Einstein, Podolsky and Rosen \cite{EPR}. Based on it, Bell in 1964 \cite{Bell, Bell66} proposed a theoretical test, known as Bell's inequality, that could distinguish between classical and quantum correlations. It was then experimentally verified \cite{Bellexp1, Bellexp2, Bellexp3, Bellexp4} and is now recognized as a fundamental aspect of quantum mechanics. The implications of quantum nonlocality are far-reaching, with potential applications in fields such as cryptography, quantum teleportation, quantum communication, and quantum computing (refer to \cite{NonlocalityReview} for a review).

Another form of quantum nonlocality, known as quantum steering, allows for one observer to remotely influence the state of another observer's quantum system, even if the two observers are separated by large distances. Quantum steering was first conceptualized by Schrodinger \cite{Schrod} and was then rigorously introduced in \cite{Wiseman}. The major difference between the scenarios to observe Bell nonlocality and quantum steering is that one of the parties is assumed to be trusted in the latter one, that is, known to perform fixed measurements.

To observe quantum nonlocality or quantum steering, any party involved in the experiment must have at least two inputs as incompatible measurements are necessary to witness any of these phenomena. Interestingly, quantum networks allow for witnessing such non-classical features without the requirement of incompatibility of measurements. The framework to witness quantum nonlocality in networks was introduced in \cite{pironio1,Pironio22, Fritz}. However, it was first noted in \cite{Pironio22} and then in \cite{Fritz} that considering independent sources shared between non-communicating parties allows one to observe quantum nonlocality with a single fixed measurement for every party. Recently, the authors in \cite{renou1, renou2, renou3, supic4} explore this phenomenon to construct scenarios where one can observe genuine quantum network nonlocality.

One of the intriguing problems in this regard concerns the minimal scenario in which any form of quantum nonlocality can be observed without any inputs. It was shown in \cite{renou1}, that genuine network nonlocality can be observed without inputs if there are three parties with three independent sources. Inspired by entanglement swapping \cite{swap}, we show here that if one of the parties is assumed to be trusted then one can observe a form of quantum nonlocality, which we term as swap-steering, using only two parties and two sources. Unlike most of the considered quantum network scenarios where one assumes independence of the sources [see nevertheless Ref. \cite{ivan2,Sarkar_2024}], we relax this assumption and allow the sources to be classically correlated. Moreover, the swap-steering scenario is the minimal scenario where one can observe a form of quantum nonlocality without inputs. Further on, there is a lack of witnesses when observing quantum nonlocality without inputs in networks. This restricts the possibility of testing these phenomena at the operational level. Interestingly, we find a linear witness to observe swap-steering thus, making our notion of nonlocality experimentally testable. We further identify some states that are unsteerable in the standard quantum steering scenario are swap-steerable. In particular, any entangled two-qubit Werner state 
is swap-steerable, which can be interpreted as an entanglement-assisted activation of quantum steering.

As an application of our work, we utilize the above result for one-sided device-independent (DI) certification where one can completely characterize the states generated by the sources and the untrusted measurements up to some degrees of freedom. Using the outcomes of the certified measurement, one can then generate genuine randomness even when an intruder might have access to them. This is extremely important for any cryptographic scheme as the security of these schemes relies on access to random number generators. 
Moreover, any of the known schemes for DI certification of states, measurements or randomness requires access to seed randomness, that is, the measurement devices whose outcomes will be used to generate random numbers, have inputs that have to be chosen randomly in order for the protocol to be secure [for instance see Refs. \cite{di4, random0, rand1, rand2, rand3, Fehr, APP13, Armin1, chainedBell, sarkar, sarkar5}]. Furthermore, DI certification of quantum states and measurements in quantum networks was recently explored in Refs. \cite{Marco, NLWEsupic, JW2, Allst1, supic4, sekatski, sarkar2023}. However, all of these certification schemes require at least two inputs for most of the measurement devices. A partial certification scheme was proposed in \cite{sekatski} that utilizes the genuine network nonlocality without inputs in a triangle network \cite{renou1}. However, using the proposed scheme \cite{sekatski}, one can only conclude that the sources need to prepare entangled states with at least $2.5\ \%$ of entanglement of formation and one can securely extract randomness of .04 bits. We utilize the maximal violation of the proposed swap-steering inequality for self-testing the singlet state along with the Bell basis which is then used for generating secured randomness of two bits without the requirement to initially feed the devices with random numbers. This is the first instance where the exact certification of quantum states, measurements, and randomness could be achieved without inputs.

\section{The scenario} 

In this work, we consider the simplest scenario consisting of two parties namely, Alice and Bob in two different labs far away from each other. Both of them receive two subsystems from two different sources $S_1, S_2$ that might be classically correlated to each other. Now they perform a single four-outcome measurement on their respective subsystems where the outcomes are denoted as $a,b=0,1,2,3$ respectively for Alice and Bob. Alice is trusted here implying that the measurement performed by her on her subsystems is known (see Fig. \ref{fig1}). We consider here that she performs the measurement corresponding to the Bell basis given by $ M_{A}=\{\proj{\phi_{+}},\proj{\phi_{-}},\proj{\psi_{+}},\proj{\psi_{-}}\}_{A_1A_2} $ where

\begin{eqnarray}\label{Amea1}
    \ket{\phi_{\pm}}_{A_1A_2}&=&\frac{1}{\sqrt{2}}\left(\ket{0}_{A_1}\ket{0}_{A_2}\pm\ket{1}_{A_1}\ket{1}_{A_2}\right)\nonumber\\
    \ket{\psi_{\pm}}_{A_1A_2}&=&\frac{1}{\sqrt{2}}\left(\ket{0}_{A_1}\ket{1}_{A_2}\pm\ket{1}_{A_1}\ket{0}_{A_2}\right).
\end{eqnarray}
Here $A_1/A_2, B_1/B_2$ denote the two different subsystems of Alice and Bob respectively. Notice that in the particular case when the sources generate the singlet state, the above scenario is equivalent to entanglement swapping.

Now, Alice and Bob repeat the experiment enough times to construct the joint probability distribution (correlations) $\vec{p}=\{p(a,b)\}$ where $p(a,b)$ denotes the probability of obtaining outcome $a,b$ with Alice and Bob respectively. These probabilities can be computed in quantum theory as
\begin{equation}
p(a,b)=\sum_jp_j\Tr\left[(M^a\otimes N^b)\rho_{A_1B_1}^j\otimes\rho_{A_2B_2}^j\right]
\end{equation}
where $M^a,N^b$ denote the measurement elements of Alice and Bob which are positive and $\sum_aM^a=\sum_bN^b=1$ and $\sum_jp_j=1$. 
It is important to recall here that Alice and Bob can not communicate with each other during the experiment. 

\begin{figure}[t]
\begin{center}
\includegraphics[width=12cm]{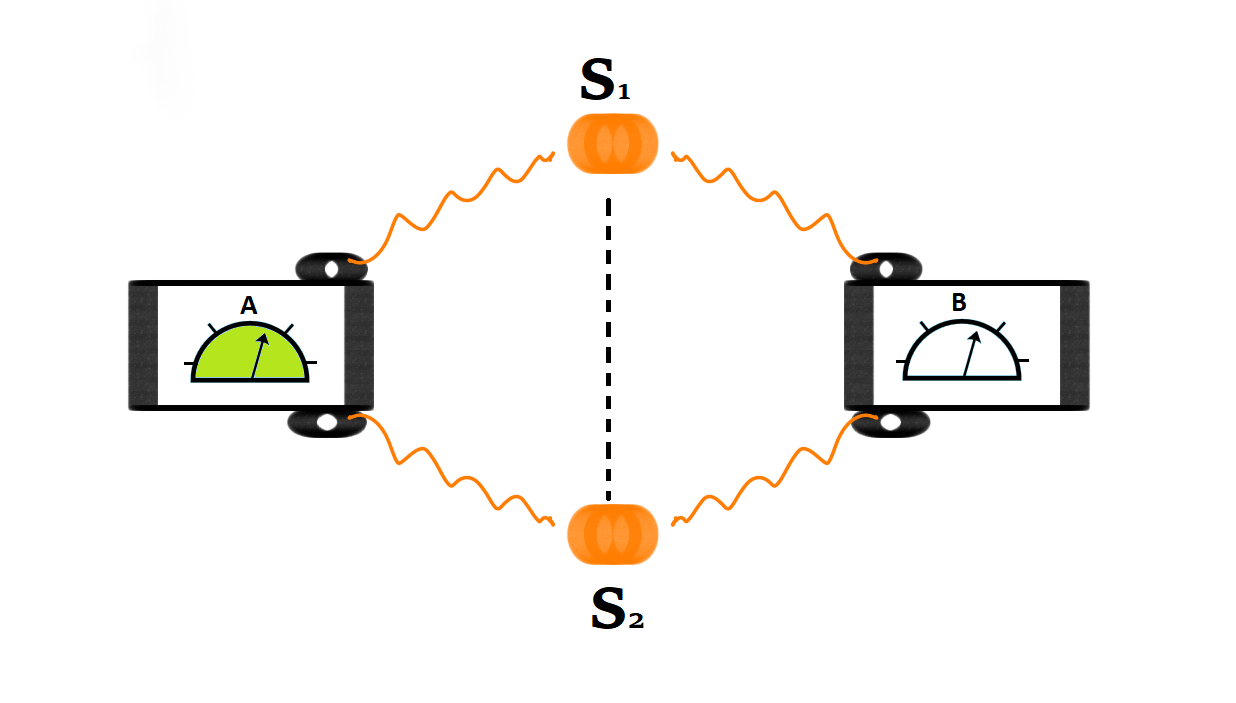}
    \caption{Swap-steering scenario. Alice and Bob are spatially separated and each of them receives two subsystems from the sources $S_1, S_2$. On the received subsystem they perform a single four-outcome measurement. Alice is trusted here, meaning that she is known to perform the Bell-basis measurement. They are not allowed to communicate during the experiment, however, the sources might classically communicate with each other. Once it is complete, they construct the joint probability distribution $\{p(a,b)\}$.}
    \label{fig1}
    \end{center}
\end{figure}

\section{Swap-steering}
Suppose that there are some variables $\lambda_i$ that are being sent by the sources $S_i$ as depicted in Fig. \ref{fig2}. 
Further on, as Alice is known to perform quantum measurements, the variable she receives is some quantum state $\rho_{\lambda_1,\lambda_2}$, however, there is no such restriction on Bob.
Let us now state the two assumptions, namely outcome-independence and separable quantum sources, that must be satisfied if Bob is classical, or equivalently if the correlations are not swap-steerable from Bob to Alice.

\begin{assu}[Outcome-independence] The outcomes of two parties are independent of each other if one has access to the hidden variables $\lambda_i$.
\end{assu}

In the scenario considered in this work, Bob's outcome $b$ being independent of Alice's outcome $a$ means that for any $a,b,\lambda_1,\lambda_2$,
\begin{eqnarray}
p(b|\lambda_1,\lambda_2,a)=p(b|\lambda_1,\lambda_2).
\end{eqnarray}
This is a weaker definition of locality when compared to Bell's assumption of locality, or the notion of locality in the standard quantum steering scenario.  
\begin{assu}[Separate quantum sources]\label{ass2} Two sources $S_i$ $(i=1,2)$ generating a joint quantum state $\rho_{\lambda_1,\lambda_2}$ are separate if the state $\rho_{\lambda_1,\lambda_2}$ is separable for any $\lambda_1,\lambda_2$. 
\end{assu}

Notice that the in above assumption \ref{ass2}, we impose on the sources is weaker when compared to independent quantum sources. As a matter of fact, the above assumption allows the sources to communicate classically with each other or equivalently the sources might generate classically correlated states. 
Now, given two sources $S_i$ for $i=1,2$ that generate some (for now hidden) states $\lambda_i$, we can always express the probability $p(a,b)$ as
\begin{eqnarray}
    p(a,b)=\sum_{\lambda_1,\lambda_2} p(\lambda_1,\lambda_2) p(a,b|\lambda_1,\lambda_2).
\end{eqnarray}
Using Bayes rule and the fact that Alice is known to be performing quantum measurements, we can express the above expression as
\begin{equation}
     p(a,b)=\sum_{\lambda_1,\lambda_2} p(\lambda_1,\lambda_2) p(a|\rho_{\lambda_1,\lambda_2})p(b|\lambda_1,\lambda_2,a).
\end{equation}
Assuming outcome-independence, we arrive at
\begin{eqnarray}
     p(a,b)=\sum_{\lambda_1,\lambda_2} p(\lambda_1,\lambda_2) p(a|\rho_{\lambda_1,\lambda_2})p(b|\lambda_1,\lambda_2).
\end{eqnarray}
\begin{figure*}[t!]%
    \centering
    {{\includegraphics[width=7cm]{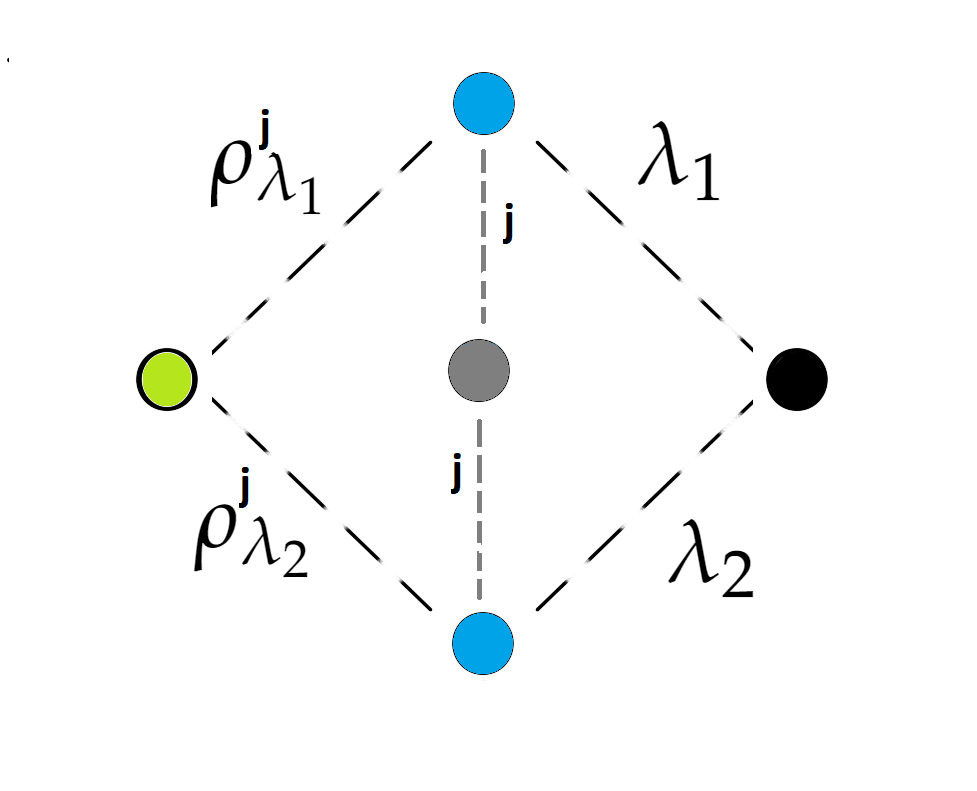} }}%
    \quad
{{\includegraphics[width=7cm]{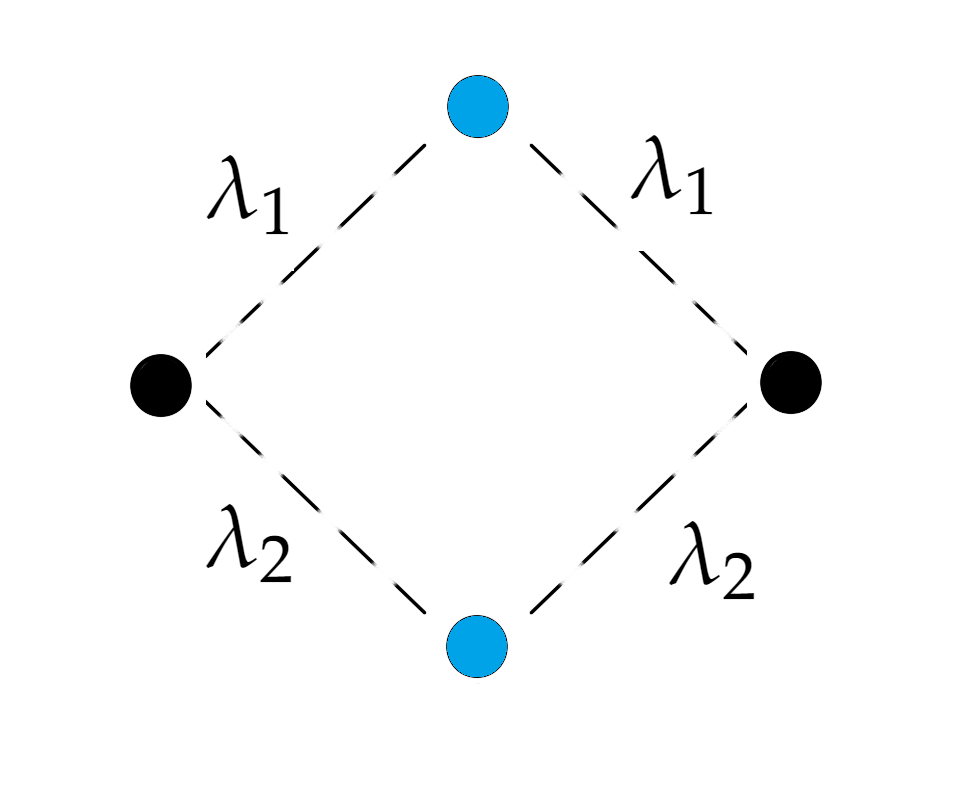} }}%
     \caption{Difference between SOHS and NLHV model in the minimal scenario. (left) Alice and Bob can explain the observed correlations $p(a,b)$ using a SOHS model. Alice is trusted and thus receives quantum states from the sources but there is no restriction over Bob. The grey box denotes an unknown source of classical random variables that might correlate the sources $S_1,S_2$. (right) Alice and Bob can explain the observed correlations $p(a,b)$ using a NLHV model.}
    \label{fig2}
\end{figure*}
Now, assuming separable quantum sources [assumption \ref{ass2}] we express $\rho_{\lambda_1,\lambda_2}$ using pure state decompositions to arrive at the following expression of $p(a,b)$
\begin{eqnarray}\label{SOHS2}
   p(a,b)=\sum_{\lambda_1,\lambda_2} p(\lambda_1,\lambda_2)\sum_{p^j_{\lambda_1,\lambda_2}}p^j_{\lambda_1,\lambda_2} p(a|\ \ket{\psi^j_{\lambda_1}}\ket{\psi^j_{\lambda_2}})p(b|\lambda_1,\lambda_2).
\end{eqnarray}
If correlations $\vec{p}$ admit the form \eqref{SOHS2}, then they are describable using a separable outcome-independent hidden state (SOHS) model. A simple example of the SOHS model would be that sources $S_1,S_2$ locally toss a coin, that is, $\lambda_{1/2}=\{1$(head),$2$(tail)\} based on which they send a state $\rho_{\lambda}$ to Alice and the outcome of the toss to Bob.

 
To witness swap-steering, a functional $W$ can be constructed which depends on $\vec{p}$ as
\begin{eqnarray}
W(\vec{p})=\sum_{a,b}c_{a,b}p(a,b) \leq\beta_{SOHS}
\end{eqnarray}
where $c_{a,b}$ are real coefficients and $\beta_{SOHS}$ denotes the maximum value attainable using assemblages admitting a SOHS model \eqref{SOHS2}. For the purpose of this article, we consider only functionals that are linear over $\vec{p}$. 

Now, consider the following functional 
\begin{equation}\label{steein}
W=p(0,0)+p(1,1)+p(2,2)+p(3,3)\leq \beta_{SOHS}
\end{equation}
Recall here that Alice is trusted and performs the measurements with elements given in \eqref{Amea1}. Let us now find the maximum value that can be achieved using correlations that admit a SOHS model \eqref{SOHS2}.

\begin{fakt}\label{fact1}
    Consider the swap-steering functional $W$ \eqref{steein}. The maximum value $\beta_{SOHS}$ that can be achieved using correlations that admit a SOHS model \eqref{SOHS2} of $W$ is $\beta_{SOHS}=\frac{1}{2}$.
\end{fakt}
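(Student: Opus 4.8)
The plan is to exploit the convex structure of the SOHS model together with a single elementary fact about the geometry of the Bell basis. First I would observe that the right-hand side of \eqref{SOHS2} is a convex combination over the hidden variables $(\lambda_1,\lambda_2)$ with weights $p(\lambda_1,\lambda_2)$, and that $W$ is linear in $\vec{p}$; hence it suffices to bound the contribution of a single fixed value of $(\lambda_1,\lambda_2)$ and then average. For fixed $(\lambda_1,\lambda_2)$ the response $p(b|\lambda_1,\lambda_2)=:r_b$ is independent of $a$ and of the decomposition index $j$ (this is exactly what outcome-independence buys us in \eqref{SOHS2}), so the $(\lambda_1,\lambda_2)$-contribution to $W$ factorizes into an Alice part and a Bob part.

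The key step concerns Alice's trusted device. For any fixed $(\lambda_1,\lambda_2,j)$ she measures the product two-qubit state $\ket{\psi^j_{\lambda_1}}\ot\ket{\psi^j_{\lambda_2}}$ in the Bell basis. Since each Bell state in \eqref{Amea1} is maximally entangled, its two Schmidt coefficients both equal $1/\sqrt{2}$, and therefore the squared overlap of any Bell state with any product state is at most the square of its largest Schmidt coefficient, namely $1/2$. I would verify this by writing, for a Bell state $\ket{\Phi}=\sum_k s_k\ket{u_k}\ot\ket{v_k}$ and a product state $\ket{\alpha}\ot\ket{\beta}$, the estimate $|\bra{\Phi}(\ket{\alpha}\ot\ket{\beta})|^2\le s_{\max}^2\big(\sum_k|\langle u_k|\alpha\rangle||\langle v_k|\beta\rangle|\big)^2\le s_{\max}^2=\tfrac12$ via Cauchy--Schwarz and normalization. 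This yields the crucial bound $p(a\mid \ket{\psi^j_{\lambda_1}}\ket{\psi^j_{\lambda_2}})\le \tfrac12$ for every outcome $a$.

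With this in hand I would define the effective Alice distribution $Q_a:=\sum_j p^j_{\lambda_1,\lambda_2}\,p(a\mid\ket{\psi^j_{\lambda_1}}\ket{\psi^j_{\lambda_2}})$, which inherits $Q_a\le\tfrac12$ and satisfies $\sum_a Q_a=1$. The $(\lambda_1,\lambda_2)$-contribution to $W$ is then $\sum_a Q_a\,r_a$, and since $r_a\ge 0$ with $\sum_a r_a=1$ this is a weighted average of the numbers $Q_a$, hence at most $\max_a Q_a\le\tfrac12$. Averaging over $(\lambda_1,\lambda_2)$ with the weights $p(\lambda_1,\lambda_2)$ preserves the bound, giving $W\le\tfrac12$ and thus $\beta_{SOHS}\le\tfrac12$.

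Finally I would establish tightness by exhibiting a SOHS strategy that saturates the bound: let both sources deterministically send Alice the qubit state $\ket{0}$, so that the product state is $\ket{0}\ket{0}=\tfrac{1}{\sqrt2}(\ket{\phi_+}+\ket{\phi_-})$, which the Bell measurement maps to $a=0$ and $a=1$ each with probability exactly $1/2$, and let Bob deterministically output $b=0$; then $W=p(0,0)=\tfrac12$. I expect the only genuinely substantive point to be the maximally-entangled-overlap bound $p(a\mid\text{product})\le\tfrac12$, since it is what forces Alice's trusted outcomes to be noisy against any separable input; everything else is bookkeeping with convex combinations and the assumptions already encoded in \eqref{SOHS2}.
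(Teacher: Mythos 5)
Your proof is correct and follows essentially the same route as the paper's: both reduce, via convexity of the SOHS decomposition and $\sum_a p(a|\lambda_1,\lambda_2)=1$, to bounding the maximal overlap of a Bell-basis element with a two-qubit product state by $1/2$, and both then exhibit an explicit separable strategy saturating the bound. The only differences are cosmetic: you spell out the overlap bound via Schmidt coefficients and Cauchy--Schwarz (the paper merely asserts the result of "optimizing over pure states"), and your saturating example (sources deterministically sending Alice $\ket{0}\ket{0}$ with Bob outputting $b=0$) is a simpler variant of the paper's classically correlated one $\rho_i=\frac{1}{2}(\proj{00}+\proj{11})_{A_iB_i}$.
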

\begin{proof} The proof follows the exact same lines as presented in \cite{sarkar6, sarkar11,sarkar12}. Let us  consider the steering functional $W$ in Eq. \eqref{steein} and express it in terms of the SOHS model \eqref{SOHS2} as
\begin{eqnarray}
    \sum_{a=0}^3\sum_{\lambda_1,\lambda_2} \ p(\lambda_1,\lambda_2)p(a|\rho_{\lambda_1,\lambda_2})p(a|\lambda_1,\lambda_2)
    \leq \sum_{\lambda_1,\lambda_2} \ p(\lambda_1,\lambda_2)\max_{a}\{p(a|\rho_{\lambda_1,\lambda_2})\}
\end{eqnarray}
where we used the fact that $\sum_ap(a|\lambda_1,\lambda_2)=1$ for any $\lambda_1,\lambda_2$. Now, maximising over $\rho_{\lambda_1,\lambda_2}$ gives us
\begin{eqnarray}
     \sum_{\lambda_1,\lambda_2} \ p(\lambda_1,\lambda_2)\max_{a}\{p(a|\rho_{\lambda_1,\lambda_2})\} \leq \sum_{\lambda_1,\lambda_2} \ p(\lambda_1,\lambda_2)\max_{\rho_{\lambda_1,\lambda_2}}\max_{a}\{p(a|\rho_{\lambda_1,\lambda_2})\}.\ \ \ 
\end{eqnarray}
Now, using the fact that $\sum_{\lambda_1,\lambda_2}p(\lambda_1,\lambda_2)=1$ for $i=1,2$ 
allows us to conclude that 
\begin{equation}
    \beta_{SOHS}\leq \max_{\ket{\psi}_{A_1},\ket{\psi}_{A_2}}\max_{a}\{p(a|\ \ket{\psi}_{A_1},\ket{\psi}_{A_2})\}.
\end{equation}
As the steering functional $W$ is linear, without loss of generality we consider the maximization only over pure states.  Now, putting in the measurement of the trusted Alice \eqref{Amea1}, which locally acts on qubit Hilbert spaces, and thus optimizing over pure states $\ket{\psi}_{A_1},\ket{\psi}_{A_2}\in \mathbbm{C}^2$ gives us $\beta_{SOHS}\leq\frac{1}{2}$. This bound can be saturated when the sources prepare the maximally mixed $\rho_i=\frac{1}{2}\left(\proj{00}+\proj{11}\right)_{A_iB_i}$ and the measurement with Bob is 
$\{\proj{00},\proj{01},\proj{10},\proj{11}\}$. This state clearly has a SOHS model and thus we get the desired SOHS bound.
\end{proof}

Consider that the sources prepare the state $\ket{\psi_i}=\ket{\phi_+}_{A_iB_i}$ and Bob performs the same measurement as Alice, that is, $M_B=\{\proj{\phi_{+}},\proj{\phi_{-}},\proj{\psi_{+}},\proj{\psi_{-}}\}_{B_0B_1}$ where the corresponding states are given in \eqref{Amea1}. Using these states and Bob's measurement one can simply evaluate the steering functional $W$ in \eqref{steein} to get the value $1$, which is the quantum bound of $W$. Notice that this is also the algebraic value of $W$.

Let us also show here that one can not observe Bell-type non-locality with only two parties without inputs. Without loss of generality, we consider here the scenario similar to one depicted in Fig. \ref{fig1} such that Alice and Bob perform a measurement with arbitrary number of outcomes on subsystems sent by two independent or classically correlated sources. However, unlike the previous scenario, Alice is untrusted. 
If the correlations $\vec{p}=\{p(a,b)\}$ admit a network-local hidden variable (NLHV) model \cite{renou1, supic4}, then they can be represented as
\begin{equation}\label{LHV1}
     p(a,b)=\sum_{\lambda_1,\lambda_2} p(\lambda_1)p(\lambda_2) p(a|\lambda_1,\lambda_2)p(b|\lambda_1,\lambda_2)
\end{equation}
for any $a,b$. Let us state the following fact which is simple to prove.

\begin{fakt}
Consider the scenario depicted in Fig. \ref{fig2}. The correlations $\vec{p}=\{p(a,b)\}$ obtained by Alice and Bob can always be described by an NLHV model \eqref{LHV1}.
\end{fakt}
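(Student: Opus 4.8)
The plan is to exploit the fact that, since neither party has any input, the entire observable content of the experiment is a single joint distribution $\{p(a,b)\}$ over the finitely many outcome pairs $(a,b)$. Showing that this distribution admits the NLHV model \eqref{LHV1} then reduces to producing one explicit decomposition, so the whole statement becomes a constructive verification rather than any kind of optimization over models.

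First I would make the trivial but crucial reduction explicit: there is only one measurement setting per party, hence only one table of numbers $p(a,b)$ to reproduce, with no cross-setting consistency or no-signalling constraints that a model would have to respect simultaneously. Second, I would exhibit the hidden variables directly. Take $\lambda_1$ to be a constant (deterministic) variable, so that $p(\lambda_1)$ is a point mass and the product structure $p(\lambda_1)p(\lambda_2)$ is automatically satisfied. Let $\lambda_2$ range over the set of outcome pairs, writing $\lambda_2=(a',b')$, and set $p(\lambda_2=(a',b')):=p(a',b')$. Define the deterministic response functions
\begin{equation}
p(a|\lambda_1,\lambda_2=(a',b'))=\delta_{a,a'}, \qquad p(b|\lambda_1,\lambda_2=(a',b'))=\delta_{b,b'}.
\end{equation}
Substituting into \eqref{LHV1} collapses the sum through the Kronecker deltas and returns
\begin{equation}
\sum_{a',b'}p(a',b')\,\delta_{a,a'}\,\delta_{b,b'}=p(a,b),
\end{equation}
so the observed correlations are reproduced exactly. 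Since this construction works for every $\vec{p}$ arising in the scenario, whether of quantum origin or not, the claim follows.

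There is essentially no analytical obstacle here; the content of the statement is conceptual rather than technical, which is why it is flagged as simple to prove. The point I would emphasize in a short remark afterwards is precisely why the same argument fails in the trusted swap-steering scenario of Fact \ref{fact1}: there Alice's response $p(a|\rho_{\lambda_1,\lambda_2})$ is forced to be the Born-rule statistics of the fixed Bell-basis measurement \eqref{Amea1} applied to a genuine quantum state, so the dummy-variable construction above, which requires Alice's response to be an unconstrained deterministic function of the hidden data, is no longer available. This contrast is exactly what allows swap-steering to be witnessed with two parties and no inputs while Bell-type nonlocality cannot.
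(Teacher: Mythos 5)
Your proof is correct and takes essentially the same route as the paper: the paper first invokes the well-known fact that no-input correlations always admit a single-$\lambda$ LHV model (which is proved by exactly your construction, $\lambda=(a',b')$ with deterministic responses) and then attaches a dummy second source, whereas you simply merge these two steps, making $\lambda_1$ the trivial point mass and $\lambda_2$ the outcome-encoding variable. The roles of the two sources are swapped relative to the paper, but this is immaterial, and your concluding remark about why the construction fails once Alice is trusted matches the paper's intent.
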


\begin{proof} It is well-known that if Alice and Bob do not have inputs in the standard Bell scenario, then any joint correlation can be represented using an LHV model of the form
\begin{eqnarray}\label{LHV2}
    p(a,b)=\sum_{\lambda} p(\lambda)p(a|\lambda)p(b|\lambda).
\end{eqnarray}
Now, let us consider the scenario depicted in Fig. \ref{fig2} and consider that Alice and Bob's outcomes are independent of the source $S_2$
, that is, $p(a|\lambda_2)=p(a)$ and $p(b|\lambda_2)=p(b)$. 
Now, Eq. \eqref{LHV2} can be rewritten using $\lambda_2$ and the fact that $\sum_{\lambda_2}p(\lambda_2)=1$ as
\begin{eqnarray}
     p(a,b)=\sum_{\lambda,\lambda_2} p(\lambda)p(\lambda_2)p(a|\lambda,\lambda_2)p(b|\lambda,\lambda_2)
\end{eqnarray}
which is the form \eqref{LHV1}.
\end{proof}

The above fact can be straightforwardly generalized to the scenario with arbitrary number of sources between Alice and Bob. It is then well-known that one can not observe any non-locality without inputs when there is a single source distributing subsystems to Alice and Bob. Thus, to observe any form of quantum non-locality in the minimal possible scenario, in the sense that there are no inputs and only two parties, one has to trust either of the parties.  Consequently, quantum steering can also be observed in scenarios where one can not observe Bell non-locality. Let us now show that a class of states that is unsteerable in the standard quantum steering scenario is swap-steerable. 

\subsection{Entanglement assisted activation of steerability} Let us now consider the Werner state given by
\begin{eqnarray}\label{Werner}
    \rho^W(\alpha)=\alpha\proj{\phi_{+}}+(1-\alpha)\frac{\I}{4}.
\end{eqnarray}
The above state is separable iff $\alpha\leq\frac{1}{3}$ \cite{Werner1}. As proven in \cite{Wiseman, Bowles1}, the above  state is steerable in the standard quantum steering scenario iff $\alpha>\frac{1}{2}$. Thus, in the range of $\frac{1}{3}<\alpha\leq\frac{1}{2}$, the Werner state is unsteerable but entangled. We show here that the Werner state when coupled with the maximally entangled state is swap-steerable. Thus when assisted with entanglement, unsteerable states can be activated to display steerability without inputs.

\begin{fakt}
   The Werner state $\rho^W(\alpha)$ \eqref{Werner} with the maximally entangled state is swap-steerable for any $\alpha>\frac{1}{3}$.
\end{fakt}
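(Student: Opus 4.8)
The plan is to exhibit one explicit protocol whose witness value exceeds the SOHS bound $\beta_{SOHS}=\tfrac12$ of Fact~\ref{fact1} precisely when $\alpha>\tfrac13$. I would let source $S_2$ prepare the maximally entangled state $\proj{\phi_+}_{A_2B_2}$ and source $S_1$ prepare the Werner state $\rho^W(\alpha)$ on $A_1B_1$, and I would have Bob measure in the same Bell basis as Alice, i.e. $M_B=\{\proj{\phi_+},\proj{\phi_-},\proj{\psi_+},\proj{\psi_-}\}_{B_1B_2}$. With this fixed choice the whole argument reduces to evaluating $W$ on the correlations generated by the product state $\rho^W(\alpha)_{A_1B_1}\otimes\proj{\phi_+}_{A_2B_2}$.

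The key simplification is that $W$ is linear in the shared state, so writing $\rho^W(\alpha)=\alpha\proj{\phi_+}+(1-\alpha)\tfrac{\I}{4}$ splits the witness into two pieces,
\begin{equation}
W=\alpha\,W\big[\proj{\phi_+}\otimes\proj{\phi_+}\big]+(1-\alpha)\,W\big[\tfrac{\I}{4}\otimes\proj{\phi_+}\big].
\end{equation}
The first piece is exactly the entanglement-swapping configuration already discussed below Fact~\ref{fact1}: Alice's Bell measurement on $A_1A_2$ steers $B_1B_2$ into the matching Bell state, so $p(a,a)=\tfrac14$ for every $a$ and $W[\proj{\phi_+}\otimes\proj{\phi_+}]=1$.

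For the second (noise) piece I would carry out the partial traces over the two maximally mixed halves. Since every Bell vector $\ket{\Phi_a}$ ($a=0,1,2,3$) is maximally entangled, $\Tr_{A_1}\!\big[\proj{\Phi_a}_{A_1A_2}\tfrac{\I_{A_1}}{2}\big]=\tfrac14\I_{A_2}$ and likewise $\Tr_{B_1}\!\big[\proj{\Phi_a}_{B_1B_2}\tfrac{\I_{B_1}}{2}\big]=\tfrac14\I_{B_2}$, so that
\begin{equation}
p(a,a)=\Tr_{A_2B_2}\!\Big[\big(\tfrac14\I_{A_2}\otimes\tfrac14\I_{B_2}\big)\proj{\phi_+}_{A_2B_2}\Big]=\tfrac{1}{16}
\end{equation}
for each of the four outcomes, whence $W[\tfrac{\I}{4}\otimes\proj{\phi_+}]=\tfrac14$. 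Combining the two pieces gives $W=\alpha+\tfrac14(1-\alpha)=\tfrac{3\alpha+1}{4}$, which strictly exceeds $\beta_{SOHS}=\tfrac12$ exactly when $\alpha>\tfrac13$.

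I do not expect a serious obstacle: the only real computation is the noise-term partial trace, and it collapses immediately from the maximal entanglement of the Bell basis. The one point worth flagging is that this argument establishes swap-steerability only through the lower bound coming from Bob's Bell-basis measurement; that the bound is tight is strongly suggested by the fact that the resulting threshold $\alpha>\tfrac13$ coincides precisely with the separability threshold of $\rho^W(\alpha)$, so that every entangled two-qubit Werner state becomes swap-steerable once assisted by a maximally entangled state — which is exactly the claimed entanglement-assisted activation of steerability.
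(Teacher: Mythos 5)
Your proposal is correct and follows essentially the same route as the paper: both exhibit the configuration of one Werner-state source plus one maximally entangled source with Bob performing the Bell-basis measurement, evaluate the witness $W$ to get $\tfrac{3\alpha+1}{4}$, and invoke the SOHS bound $\beta_{SOHS}=\tfrac12$ from Fact~\ref{fact1} to conclude steerability for $\alpha>\tfrac13$. The only (cosmetic) difference is that the paper computes $W=\tfrac{3\alpha_1\alpha_2+1}{4}$ for two general Werner states and then sets $\alpha_1=1$, whereas you fix the maximally entangled state from the start and make the evaluation explicit via linearity and the noise-term partial trace, which usefully fills in the computation the paper only asserts.
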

\begin{proof}
    Consider the scenario presented in Fig. \ref{fig1}. Now, suppose that the source  $S_i$ generates the state $\rho^W_{A_iB_i}(\alpha_i)$ for $i=1,2$. Bob again performs the Bell basis measurement $M_B$. Given these states and measurements, let us again evaluate the steering functional $W$ in \eqref{steein} to obtain
    \begin{eqnarray}\label{Wenervalue1}
        W=\frac{3\alpha_1\alpha_2+1}{4}.
    \end{eqnarray}
    As proven above in Fact \ref{fact1}, if $W>\frac{1}{2}$ then the state is swap-steerable from Bob to Alice. Thus, we have from \eqref{Wenervalue1} that the Werner state \eqref{Werner} is steerable if $\frac{3\alpha_1\alpha_2+1}{4}>\frac{1}{2}$. Consequently, for any value of $\alpha_1\alpha_2>\frac{1}{3}$, the Werner states are swap-steerable. 
 Let us now observe that if $\alpha_1=1$, that is, the source $S_1$ generates maximally entangled state, then for any $\alpha_2>\frac{1}{3}$ the Werner state becomes swap-steerable. 
\end{proof}

Thus, some states that are unsteerable in the standard quantum steering scenario can be activated using the maximally entangled state and shown to be swap-steerable. However, we also notice that to observe swap-steering, the states generated by both sources can not be unsteerable simultaneously. 
Let us now find some necessary conditions to observe swap-steering.

\subsection{Necessary conditions for swap-steering} Consider again the scenario depicted in Fig. \ref{fig1}. Notice that one of the trivial necessary conditions to observe swap-steering is that the trusted party, here Alice, needs to perform an entangled measurement. Let us now restrict to the case when the number of outcomes on Bob's side is a composite number, that is, $b=b_0b_1$ where $b_0,b_1$ are positive integers. Now, Bob's measurement $\{N^b\}$ with $b=b_0b_1$ prepares a set of positive operators on the trusted Alice's side, known as assemblage, denoted as $\{\sigma_b\}$ where $\sigma_b=\sum_jp_j\Tr_{B}(\I_A\otimes N^b\rho_{A_1B_1}^j\otimes\rho_{A_2B_2}^j)$. Now, we show that if the assemblage is of a particular form, one can never observe swap-steering.

\begin{fakt}\label{fact4}
    Consider the swap-steering scenario depicted in Fig. \ref{fig1} where Alice and Bob share the states $\rho_{A_1B_1},\rho_{A_2B_2}$. Let us assume that Bob performs a $n-$outcome measurement which prepares the assemblage $\{\sigma_{b_0b_1}\}$ on the trusted Alice's side. If $\sigma_{b_0b_1}$ is separable for $b_0=0,1,\ldots,n_1-1,\ b_1=0,1,\ldots,n_2-1$,
    then there exists a SOHS model for both the states $\rho_{A_1B_1},\rho_{A_2B_2}$.
\end{fakt}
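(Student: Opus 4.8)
The plan is to exploit the fact that, because Alice's measurement is fixed and known, the entire observable statistics are encoded in the assemblage via $p(a,b)=\Tr[M^a\sigma_{b_0b_1}]$. So it suffices to construct a SOHS model reproducing these numbers. Writing $b$ as shorthand for the composite outcome $(b_0,b_1)$, my first step is to feed in the separability hypothesis: for every $b$ decompose the (subnormalised) assemblage element as
\[
\sigma_{b_0b_1}=\sum_k r^k_{b_0b_1}\,\proj{\phi^k_{b_0b_1}}_{A_1}\otimes\proj{\chi^k_{b_0b_1}}_{A_2},
\]
with weights $r^k_{b_0b_1}\ge 0$ and pure states $\ket{\phi^k_{b_0b_1}}_{A_1}$, $\ket{\chi^k_{b_0b_1}}_{A_2}$. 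Substituting into $p(a,b)=\Tr[M^a\sigma_{b_0b_1}]$ and using linearity of the trace yields
\[
p(a,b)=\sum_k r^k_{b_0b_1}\,p\!\left(a\,\middle|\,\ket{\phi^k_{b_0b_1}},\ket{\chi^k_{b_0b_1}}\right),
\]
which already displays the structure of \eqref{SOHS2}: each term is a product state on $A_1A_2$ fed into Alice's known Bell measurement \eqref{Amea1}.

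The second step packages the indices into hidden variables. I would take a shared classical variable $\Lambda=(b',k)$ distributed as $p(\Lambda)=r^k_{b'}$, and declare the two sources fully correlated by setting $\lambda_1=\lambda_2=\Lambda$ with $p(\lambda_1,\lambda_2)=\delta_{\lambda_1\lambda_2}\,p(\Lambda)$; this is legitimate precisely because Assumption \ref{ass2} is strictly weaker than source independence and permits classically correlated sources. Conditioned on $\Lambda=(b',k)$, source $S_1$ emits the pure state $\ket{\phi^k_{b'}}_{A_1}$, source $S_2$ emits $\ket{\chi^k_{b'}}_{A_2}$, and Bob's classical response function is deterministic, $p(b|\lambda_1,\lambda_2)=\delta_{b,b'}$. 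Inserting these into the SOHS template \eqref{SOHS2} collapses the Kronecker delta on $b$ and returns exactly the expression for $p(a,b)$ derived above, so the observed correlations for the pair $\rho_{A_1B_1},\rho_{A_2B_2}$ admit a SOHS model.

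What remains is routine bookkeeping. Since $\sum_aM^a=\I$, one has $\Tr[\sigma_{b_0b_1}]=p(b_0b_1)=\sum_k r^k_{b_0b_1}$, hence $\sum_\Lambda p(\Lambda)=\sum_{b'}p(b')=1$ and $p(\Lambda)$ is a genuine distribution; one also notes that $\{p(a,b)\}$ really are all accessible statistics, because Alice only ever performs the single fixed measurement \eqref{Amea1}. I do not expect a deep obstacle: the statement is of the ``simple to prove'' kind. The one point deserving care is conceptual rather than computational, namely recognising that it is separability of the \emph{post-measurement} assemblage, not of the underlying source states, that supplies exactly the resource needed, and that a SOHS model is free to replace the genuine source states by the effective separable ensembles $\{r^k_b,\ket{\phi^k_b}\otimes\ket{\chi^k_b}\}$ while routing Bob's outcome through a deterministic classical response. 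The composite labelling $b=b_0b_1$ plays no essential role in the argument itself; it merely fixes the indexing convention inherited from Bob's (possibly Bell-type) measurement, and the proof goes through verbatim under any relabelling of the $n$ outcomes.
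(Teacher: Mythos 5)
Your proof is correct, and it reaches the conclusion by a somewhat different route than the paper. You work directly at the level of the hidden-variable decomposition \eqref{SOHS2}: you take an arbitrary separable decomposition $\sigma_{b_0b_1}=\sum_k r^k_{b_0b_1}\proj{\phi^k_{b_0b_1}}\otimes\proj{\chi^k_{b_0b_1}}$, promote the pair $(b',k)$ to a hidden variable shared identically by both sources, and let Bob answer deterministically $p(b|\Lambda)=\delta_{b,b'}$; normalisation follows from $\sum_{b}\Tr\sigma_{b}=1$, exactly as you note. The paper instead builds an explicit quantum simulation: it defines classical--quantum states $\tilde{\rho}^{\,j}_{A_iB_i}\propto\sum_{b_i}\sigma^j_{b_i,A_i}\otimes\proj{b_i}_{B_i}$, one per source, together with the product measurement $\tilde{M}_{b_0b_1}=\proj{b_0}_{B_1}\otimes\proj{b_1}_{B_2}$ for Bob, checks that this reproduces the same assemblage, and then observes that these flag states are manifestly separable, hence SOHS. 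The two constructions encode the same idea (route Bob's outcome through a classical label correlated with the corresponding separable piece of the assemblage), but they differ in two respects. First, the paper's construction presupposes a decomposition of the structured form $\sigma_{b_0b_1}=\sum_j\sigma^j_{b_0}\otimes\sigma^j_{b_1}$, with the $A_1$ factor depending only on $b_0$ and the $A_2$ factor only on $b_1$; your argument needs no such structure and works for a completely general separable decomposition, because you are free to correlate the two sources maximally --- which Assumption \ref{ass2} explicitly permits. Second, the paper's version keeps the two-source architecture visible (each source carries its own flag register $\proj{b_i}_{B_i}$), which makes the claim ``a SOHS model for \emph{both} states'' concrete as a pair of separable states, whereas your version is more elementary and lands directly in the form \eqref{SOHS2} without passing through an intermediate quantum model. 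Both are valid; yours is arguably the cleaner and more general argument, the paper's the more constructive one.
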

\begin{proof}
Let us first notice that 
\begin{eqnarray}
    \sum_{b_0,b_1}\sigma_{b_0b_1}&=&\sum_{b_0,b_1,j}p_j\Tr_{B}(\I_A\otimes N^b\rho_{A_1B_1}^j\otimes\rho^j_{A_2B_2})\nonumber\\&=&\sum_{j}p_j\rho_{A_1}^j\otimes\rho_{A_2}^j
\end{eqnarray}
which also allows us to conclude that $\sum_{b_0,b_1}\Tr(\sigma_{b_0b_1})
=1$. Consider now the assemblage $\{\sigma_{b_0b_1}\}$ is separable,  that is, 
the operators $\sigma_{b_0b_1}=\sum_j\sigma_{b_0}^j\otimes\sigma_{b_1}^j$. Notice that the following states  
\begin{eqnarray}\label{SOHS1}
    \tilde{\rho}_{A_iB_i}^j=\frac{1}{\mathcal{N}_{i,j}}\sum_{b_i=0}^{n_i-1}\sigma_{b_i,A_i}^j\otimes\proj{b_i}_{B_i}
\end{eqnarray}
where $\mathcal{N}_{i,j}=\sum_{b_i}\Tr(\sigma_{b_i}^j)$ and Bob performing a measurement of the form
\begin{eqnarray}
    \tilde{M}_{b_0b_1}=\proj{b_0}_{B_1}\otimes\proj{b_1}_{B_2} 
\end{eqnarray}
for $b_i=0,1\ldots,n_i-1$ 
give the same assemblage on Alice's side as the states $\sum_jp_j\rho_{A_1B_1}^j\rho^j_{A_2B_2}$ and the measurement $M_b=\{N^b\}$. It is straightforward to observe that the states $ \tilde{\rho}_{A_iB_i}$ are separable and thus the $\rho_{A_iB_i}$ admit a SOHS model.
\end{proof}
Consequently, one can observe from Fact \ref{fact4} that if Bob performs a product measurement, then the states are not swap-steerable from Bob to Alice. Further on, both states prepared from the sources are needed to be entangled to observe swap-steering. Thus, to observe swap-steering both the states and measurements must be entangled.

\section{Self-testing and randomness certification} Let us now utilise the above swap-steering inequality \eqref{steein} for self-testing the quantum realisations suggested after Fact \ref{fact1}. 
Self-testing in the 1SDI scenario was first defined 
in Ref. \cite{Supic, Alex}. Inspired by \cite{sarkar6, sarkar12, sarkar11}, we present a general definition of self-testing in the 1SDI scenario in quantum networks without inputs with one trusted party. Interestingly, we do not require assuming a pure underlying state or projective measurements. For a note, we express the measurements of both parties in the observable picture and represent it as $A_0,B_0$. For a discussion on observables refer to Appendix A.

Let us revisit the previous experiment in which Alice and Bob conduct measurements on the states $\rho_{AB}$ prepared by the sources $S_i\ (i=1,2)$ and observe the correlations ${p(a,b)}$. It is important to note that Alice's observables $A_0$ is fixed, whereas Bob's observables $B_0$ is arbitrary. Now, let us examine a reference experiment that reproduces the same statistics as the actual experiment but involves the states $\tilde{\rho}_{AB}$ and observables represented by $\tilde{B}_0$, which both parties wish to validate. The states $\rho_{AB}$ and the observables $B_0$ are self-tested from $\{p(a,b)\}$ if there exists a unitary $U_B:\mathcal{H}_B\to \mathcal{H}_B$ such that 
\begin{equation}
 (\mathbbm{1}_A\otimes U_B)\rho_{AB}(\mathbbm{1}_A\otimes U_B^{\dagger})=\tilde{\rho}_{AB'}\otimes\rho_{B''},
\end{equation}
\begin{equation}
    U_B\,B_0\,U_B^{\dagger}=\tilde{B}_0\otimes\mathbbm{1}_{B''},
\end{equation}
where $\mathcal{H}_B$ decomposes as $\mathcal{H}_B=\mathcal{H}_{B'}\otimes\mathcal{H}_{B''}$ such that $\mathcal{H}_{B''}$ denotes the junk Hilbert space. The states $\rho_{B''}$ and $\mathbbm{1}_{B''}$ denote the junk state and the identity acting on $\mathcal{H}_{B''}$.

Let us now state our self-testing statement but before proceeding, let us define  Alice's observable corresponding to the Bell basis as
\begin{eqnarray}\label{A0}
    A_0=\sum_{k=1}^4\mathbbm{i}^{k}\proj{\phi_k}.
\end{eqnarray}
where $\ket{\phi_1}=\ket{\phi^+},\ket{\phi_2}=\ket{\psi^+},\ket{\phi_3}=\ket{\phi^-}, \ket{\phi_4}=\ket{\psi^-}$.

\begin{fakt}\label{Theo1} 
Assume that the steering inequality \eqref{steein},  with trusted Alice choosing the observable $A_0$ \eqref{A0}, is maximally violated by a separable state $\rho_{AB}$ acting on $\mathbbm{C}^2\otimes\mathbbm{C}^2\otimes\mathcal{H}_B$ and Bob's observable $B_0$. Then, the following statements hold true:
\\
\\
1. Bob's measurement is projective with his Hilbert space decomposing as $\mathcal{H}_{B}=(\mathbbm{C}^2)_{B_1'}\otimes(\mathbbm{C}^2)_{B_2'}\otimes \mathcal{H}_{B_{12}''}$ for some auxiliary Hilbert space $\mathcal{H}_{B''_{12}}=\mathcal{H}_{B''_{1}}\otimes \mathcal{H}_{B''_{2}}$.\\
\\
2.  \ \  There exist unitary transformations, $U_{i}:\mathcal{H}_B\rightarrow\mathcal{H}_B$,  such that
\begin{eqnarray}\label{lem1.2}
(\mathbbm{1}_{A}\otimes U_B)\rho_{AB}(\mathbbm{1}_{A}\otimes U_B^{\dagger})=\proj{\phi^+}_{A_1B_1'}\otimes\proj{\phi^+}_{A_2B_2'}\otimes \rho_{B_1''B_2''},\ \ 
\end{eqnarray}
where $B_i''$ denotes Bob's auxiliary system, and 
\begin{eqnarray}\label{lem1.1}
\quad U_B\,B_0\,U_B^{\dagger}=A_0\otimes \mathbbm{1}_{B_1''B_2''}
\end{eqnarray}
where $U_B=U_1\otimes U_2$.
\end{fakt}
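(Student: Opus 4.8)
The plan is to extract every structural constraint from the single hypothesis that $W$ attains its algebraic maximum, and then to turn these constraints into a rigidity statement via uniqueness of steering realisations together with the separability of the two sources. First I would use that $\sum_{a,b}p(a,b)=1$ with all terms non-negative, so that $W=\sum_a p(a,a)\le 1$ is saturated precisely when $p(a,b)=0$ for every $a\neq b$. In terms of the assemblage $\sigma_b=\Tr_B[(\mathbbm{1}_A\otimes N^b)\rho_{AB}]$ prepared on Alice's qubits, these vanishing probabilities read $\bra{\phi_a}\sigma_b\ket{\phi_a}=0$ for $a\neq b$; since $\sigma_b\ge 0$ this forces $\sigma_b\ket{\phi_a}=0$, hence $\sigma_b=q_b\proj{\phi_b}$ with $q_b=p(b,b)$ and $\sum_b q_b=1$. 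Thus, even though Alice only ever measures the fixed observable $A_0$, positivity upgrades the diagonal data into a complete determination of the assemblage: it is rank-one in the Bell basis.

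Next I would establish projectivity and the source structure. Decomposing $\rho_{AB}$ into pure components $\ket{\Psi^k}=\sum_a\ket{\phi_a}_A\ket{\eta^k_a}_B$ in Alice's Bell basis, the operator $O=\sum_a\proj{\phi_a}\otimes N^a$ obeys $0\le O\le\mathbbm{1}$ and $\Tr(O\rho_{AB})=W=1$, so each $\ket{\Psi^k}$ lies in the eigenvalue-one eigenspace of $O$. This yields $N^a\ket{\eta^k_a}=\ket{\eta^k_a}$ and $N^b\ket{\eta^k_a}=0$ for $b\neq a$, so on the span of the $\ket{\eta^k_a}$ — the orthogonal complement being never probed and hence absorbed into the junk — the $\{N^b\}$ act as mutually orthogonal projectors, giving statement 1. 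I would then invoke the separability of $\rho_{AB}$ across the two sources, $A_1B_1:A_2B_2$: matching the $A_1:A_2$-entangled rank-one form $\sigma_b=q_b\proj{\phi_b}$ against a source-separable state forces each source to distribute a maximally entangled two-qubit state and pins the weights to $q_b=\tfrac14$. The explicit two-qubit check confirms the mechanism — replacing either $\ket{\phi^+}$ by a non-maximally entangled pure state immediately produces a non-zero off-diagonal $p(a,b)$ and lowers $W$.

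Finally, the assemblage $\{\tfrac14\proj{\phi_b}\}$ with orthonormal $\{\ket{\phi_b}\}$ and average $\rho_A=\tfrac14\mathbbm{1}_{A_1A_2}$ is exactly the one produced by the canonical realisation $\proj{\phi^+}_{A_1B_1'}\otimes\proj{\phi^+}_{A_2B_2'}$ with Bob's Bell-basis measurement. By uniqueness of purifications (the Gisin--Hughston--Jozsa--Wootters theorem), any realisation of the same assemblage on the fixed two-qubit Alice space is related to the canonical one by an isometry on Bob's purifying system; extending it to a unitary $U_B$ on $\mathcal{H}_B=(\mathbbm{C}^2)_{B_1'}\otimes(\mathbbm{C}^2)_{B_2'}\otimes\mathcal{H}_{B''_{12}}$ gives \eqref{lem1.2}, and the image of Bob's projectors is $A_0\otimes\mathbbm{1}_{B''_{12}}$, which is \eqref{lem1.1}. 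The product form $U_B=U_1\otimes U_2$ then follows because source-separability splits the isometry into two commuting pieces acting on the two halves of Bob's system.

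I expect the second step to be the main obstacle, since one must simultaneously (i) reduce a general, possibly non-projective, entangled four-outcome POVM on an arbitrary $\mathcal{H}_B$ to a projective one, (ii) use source-separability to force maximal entanglement at each source and the uniform weights $q_b=\tfrac14$, and (iii) extract the bipartite tensor structure of $\mathcal{H}_B$ needed for $U_B=U_1\otimes U_2$. The last point is delicate precisely because Bob's measurement is entangled across $B_1:B_2$, so the decomposition of $\mathcal{H}_B$ into two qubit factors plus junk is something the proof must construct rather than assume.
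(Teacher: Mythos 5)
Your opening moves are sound, and in places cleaner than the paper's. The saturation argument ($\sum_{a,b}p(a,b)=1$, so $W=1$ iff $p(a,b)=0$ for all $a\neq b$), the positivity upgrade to $\sigma_b=q_b\proj{\phi_b}$, and the projectivity argument via $O=\sum_a\proj{\phi_a}\otimes N^a$ with $0\le O\le \I$ and $\Tr(O\rho_{AB})=1$ are all correct; the paper reaches the analogous stabilization relations by Fourier-transforming to observables and applying Cauchy--Schwarz, so your route up to this point is a legitimate, more elementary alternative. The genuine gap is the sentence ``matching the $A_1{:}A_2$-entangled rank-one form $\sigma_b=q_b\proj{\phi_b}$ against a source-separable state forces each source to distribute a maximally entangled two-qubit state and pins the weights to $q_b=\tfrac14$.'' That sentence \emph{is} the theorem: it is exactly the rigidity content of statement 2, and you assert it rather than prove it (your ``explicit two-qubit check'' only confirms that the canonical realisation is consistent with the data, not that it is the only one). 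The paper's mechanism for this step is a concrete calculation for which you offer no substitute: write the separable state as a mixture of products of pure states $\ket{\psi^1_s}_{A_1B_1}\otimes\ket{\psi^2_{s'}}_{A_2B_2}$, use the Schmidt decomposition to write $\ket{\psi^i}=(P_i\otimes\I)\ket{\phi^+}$ with $P_i$ positive and full rank, derive from the stabilization relations that Bob's (unitary) observable must equal $(P_1^{-1}\otimes P_2^{-1})A_0(P_1\otimes P_2)$ up to transposition, and conclude from unitarity that $[A_0,(P_1\otimes P_2)^2]=0$; since $A_0$ has a nondegenerate entangled eigenbasis while $P_1\otimes P_2$ is diagonalized in a product basis, this forces $P_1=P_2=\I$, i.e.\ maximal entanglement at each source, with $q_b=\frac14$ emerging as a consequence rather than an input.

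The second, independent flaw is that GHJW cannot do the job you assign to it. Uniqueness of purifications relates \emph{purifications} of $\rho_A$ by isometries on the purifying system, but here $\rho_{AB}$ is mixed (it is separable by hypothesis), and mixed extensions producing the same assemblage are in general not related by any unitary on Bob's side: with Alice holding one qubit, $\proj{\phi^+}_{AB}$ and $\frac12\left(\proj{00}+\proj{11}\right)_{AB}$, each with Bob measuring the computational basis, produce the identical assemblage $\{\frac12\proj{0},\frac12\proj{1}\}$, yet one is entangled and the other separable, so no Bob-local unitary maps one realisation to the other. Hence ``same assemblage on the fixed Alice space $\Rightarrow$ Bob-local unitary equivalence'' is false as a general principle; the Fact's conclusion holds only because source-separability enters structurally, through the commutation argument above. (If you instead purify $\rho_{AB}$ by an ancilla $E$, GHJW gives an isometry acting jointly on $BE$, which is far weaker than a unitary on $B$ alone, let alone one of the form $U_1\otimes U_2$.) Relatedly, the splitting $\mathcal{H}_B=(\mathbbm{C}^2)_{B_1'}\otimes(\mathbbm{C}^2)_{B_2'}\otimes\mathcal{H}_{B''_{12}}$ and the product form $U_B=U_1\otimes U_2$ do not follow for free from ``source-separability splits the isometry'': the paper must prove that Bob's local supports attached to different pure components $s,s'$ are mutually orthogonal (the computation with the vectors $\ket{g^i_{ss'}}$ and the relations $\langle g^j_{ll'}|g^i_{ss'}\rangle=0$), and only then do the block decomposition and the product unitary exist.
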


The proof of the above fact is given in Appendix \ref{A}. An interesting application of the above self-testing statement is that the untrusted Bob's measurement device can generate true randomness that is secure against adversaries. For this purpose, we consider an eavesdropper, Eve, who cannot directly read Bob's outcomes but may have correlations with him that she can exploit to infer his results. Consequently, we consider a state $\rho_{ABE}$ which is shared among Alice, Bob and Eve. As Eve's dimension is unrestricted, we can purify the state as $\ket{\psi_{ABE}}$ such that $\Tr_E\psi_{ABE}=\rho_{AB}$ where $\rho_{AB}$ is separable. 

Now, to certify whether the measurement outcomes as observed by Bob is truly random, we consider that Eve wants to guess the outcome of Bob's measurement. In order to do so,  she performs a measurement $Z=\{E_e\}$ on her part of the shared states. Here the outcome $e$ is Eve's best guess of Bob's outcome. However, any operation by Eve should not alter the statistics $\vec{p}=\{p(a,b)\}$ observed by Alice and Bob, that is,
\begin{eqnarray}
    p(a,b)=\bra{\psi}M_a\otimes N_b\otimes\I_E\ket{\psi}.
\end{eqnarray}
This is extremely important as the adversary Eve would like to remain invisible to Alice and Bob.

The number of random bits that can be securely generated from Bob's measurement is quantified as $H_{\min}=-\log_2 G(y,\vec{p})$ \cite{di4}, where $G(y,\vec{p})$ is known as the local guessing probability which can be computed as,
\begin{equation}\label{LGpr}
    G(\vec{p})=\sup_{S\in S_{\vec{p}}}\sum_{b}\bra{\psi} \I_A \otimes N_{b} \otimes E_b\ket{\psi},
\end{equation}
where $S_{\vec{p}}$ is the set of all Eve's strategies comprising of the shared states and her measurement that reproduce the probability distribution $\vec{p}$ as expected by Alice and Bob.

Let us now suppose that the swap-steering inequality \eqref{steein} is maximally violated
by $\vec{p}$. As proven above in Fact \ref{Theo1}, this implies that the state shared by Alice, Bob, and Eve up to local unitary operations is, $\ket{\psi_{{ABE}}}=\ket{\phi^+_{A_1B_1'}}\ket{\phi^+_{A_2B_2'}}\ket{\mathrm{\mathrm{aux}}_{\mathrm{B_{12}''E}}}$ as well as $N_b=\proj{\phi_b}\otimes\I_{\mathrm{B''_{12}}}$ where $\ket{\phi_b}$ are given above Eq. \eqref{A0}. Putting these states and measurement in the formula \eqref{LGpr} we obtain 
\begin{eqnarray}\label{guess1}
    G(\vec{p})=\sum_{b}\bra{\phi^+}\bra{\phi^+} (\I_A \otimes \proj{\phi_b})\ket{\phi^+}\ket{\phi^+}  \bra{\mathrm{aux}}\I_{B_{12''}}\otimes E_b\ket{\mathrm{aux}}.\qquad
\end{eqnarray}
Now for all $b$, $\bra{\phi^+}\bra{\phi^+} (\I_A \otimes \proj{\phi_b})\ket{\phi^+}\ket{\phi^+}=1/4$ which allows us to conclude from \eqref{guess1} that
\begin{eqnarray}
     G(\vec{p})=\frac{1}{4}\sum_{b}\bra{\mathrm{aux}}\I_{B_{12''}}\otimes E_b\ket{\mathrm{aux}}=\frac{1}{4}.
\end{eqnarray}
Consequently, $-\log_2 G(\vec{p})=2$ bits of randomness can be certified from Bob's measurement outcomes using our self-testing scheme.

It is important to note here that the generation of secure randomness is based on the assumption that the sources can only be correlated in a classical way. However, the adversary can always guess the outcomes of Bob if she manages to entangle the sources. For instance, (i) she can prepare both devices beforehand or (ii) she herself could perform an entangled measurement on the systems arriving on Bob's side and then send the outcome to Bob. This problem would persist in any security protocols involving two different constrained sources. However, the second type of attack (ii) can be avoided if Bob randomly chooses not to perform a measurement in some runs of the experiment. Since Eve is unaware of this fact, she would still entangle both sources and can be detected by Alice. 
It will be extremely interesting if Alice and Bob can perform some local operations on their subsystems to figure out whether the received subsystems are generated from separable sources or not.

\section{Discussions} 
The idea of quantum steering in networks was introduced recently in \cite{netstee}. However, the scenario considered in this work was not dealt with in Ref. \cite{netstee}. Further on, the notion of quantum steering in networks \cite{netstee} required the trusted party to perform a full tomography which implied that the trusted party has inputs. Contrary to this, in the swap-steering scenario described above even the trusted party performs a single fixed measurement. This also makes our scheme experimentally friendly as one has to consider less number of correlations in order to witness quantum steering in networks. However, the measurement elements of the trusted party are maximally entangled and thus it would be beneficial to explore the possibilities of observing swap-steering with less entangled measurements. 

Constructing witnesses to observe quantum nonlocality in networks has been extremely difficult mainly due to the fact that the network-local polytope might not be convex as shown in \cite{pironio1} [see nevertheless Ref. \cite{Sarkar_2024}]. In this work, we find that assuming one of the parties to be trusted allows constructing linear witnesses to observe a form of quantum nonlocality in networks. One of the interesting follow-up directions would be to explore the structure of the set of correlations admitting the SOHS model. 
We showed in this work that any entangled Werner state can be used to witness swap-steering. An interesting follow-up question is whether every entangled state violates the notion of swap-steering. This problem has now been resolved for every bipartite entangled state in \cite{sarkar2024witnessing}. Another direction to explore will be toward generalizing the notion of swap-steering to more parties and outcomes. It is known that quantum steering is asymmetric, that is, there are quantum states that are steerable from Alice to Bob but not the other way around. It will be interesting to find similar properties of quantum states when considering the notion of swap-steering. 
Furthermore, we used swap-steering for the certification of randomness without seed randomness. It will be highly desirable to generalize the above scheme to the DI regime where no party is trusted. Another direction would be to generalize the scheme presented in this work to certify an unbounded amount of randomness. Moreover, it would be interesting to investigate whether the randomness certification can be made robust to experimental imperfections.

\begin{acknowledgments}
 We would like to thank Stefano Pironio for reviewing the manuscript and providing critical comments that considerably improved the manuscript. This project was funded within the QuantERA II Programme (VERIqTAS project) that has received funding from the European Union’s Horizon 2020 research and innovation programme under Grant Agreement No 101017733.
\end{acknowledgments}

\printbibliography

@PREAMBLE{
 "\providecommand{\noopsort}[1]{}" 
 # "\providecommand{\singleletter}[1]{#1}%" 
}

@article{NonlocalityReview,
  title = {Bell nonlocality},
  author = {Brunner, Nicolas and Cavalcanti, Daniel and Pironio, Stefano and Scarani, Valerio and Wehner, Stephanie},
  journal = {Rev. Mod. Phys.},
  volume = {86},
  issue = {2},
  pages = {419--478},
  numpages = {60},
  year = {2014},
  month = {Apr},
  publisher = {American Physical Society},
  doi = {10.1103/RevModPhys.86.419}
}

@article{Bell,
  title = {On the Einstein Podolsky Rosen paradox},
  author = {Bell, J. S.},
  journal = {Physics Physique Fizika},
  volume = {1},
  issue = {3},
  pages = {195--200},
  numpages = {6},
  year = {1964},
  month = {Nov},
  publisher = {American Physical Society},
  doi = {10.1103/PhysicsPhysiqueFizika.1.195}
}

@article{Bell66,
  title = {On the Problem of Hidden Variables in Quantum Mechanics},
  author = {Bell, John S.},
  journal = {Rev. Mod. Phys.},
  volume = {38},
  issue = {3},
  pages = {447--452},
  numpages = {0},
  year = {1966},
  month = {Jul},
  publisher = {American Physical Society},
  doi = {10.1103/RevModPhys.38.447}
}

@article{ivan2,
  title = {Quantum Nonlocality in Networks Can Be Demonstrated with an Arbitrarily Small Level of Independence between the Sources},
  author = {\ifmmode \check{S}\else \v{S}\fi{}upi\ifmmode \acute{c}\else \'{c}\fi{}, Ivan and Bancal, Jean-Daniel and Brunner, Nicolas},
  journal = {Phys. Rev. Lett.},
  volume = {125},
  issue = {24},
  pages = {240403},
  numpages = {6},
  year = {2020},
  month = {Dec},
  publisher = {American Physical Society},
  doi = {10.1103/PhysRevLett.125.240403}
}

@article{Wiseman,
  title = {Steering, Entanglement, Nonlocality, and the Einstein-Podolsky-Rosen Paradox},
  author = {Wiseman, H. M. and Jones, S. J. and Doherty, A. C.},
  journal = {Phys. Rev. Lett.},
  volume = {98},
  issue = {14},
  pages = {140402},
  numpages = {4},
  year = {2007},
  month = {Apr},
  publisher = {American Physical Society},
  doi = {10.1103/PhysRevLett.98.140402}
}

@Article{di4,
author={Pironio, S.
and Ac{\'i}n, A.
and Massar, S.
and de la Giroday, A. Boyer
and Matsukevich, D. N.
and Maunz, P.
and Olmschenk, S.
and Hayes, D.
and Luo, L.
and Manning, T. A.
and Monroe, C.},
title={Random numbers certified by Bell's theorem},
journal={Nature},
year={2010},
month={Apr},
day={01},
volume={464},
number={7291},
pages={1021-1024},
abstract={True randomness does not exist in classical physics, where randomness is necessarily a result of forces that may be unknown but exist. The quantum world, however, is intrinsically truly random. This is difficult to prove, as it is not readily distinguishable from noise and other uncontrollable factors. Now Pironio et al. present proof of a quantitative relationship between two fundamental concepts of quantum mechanics --- randomness and the non-locality of entangled particles. They first show theoretically that the violation of a Bell inequality certifies the generation of new randomness, independently of any implementation details. To illustrate the approach, they then perform an experiment in which --- as confirmed using the theoretical tools that they developed --- 42 new random bits have been generated. As well as having conceptual implications, this work has practical implications for cryptography and for numerical simulation of physical and biological systems.},
issn={1476-4687},
doi={10.1038/nature09008}
}

@article{chainedBell,
	doi = {10.1088/1367-2630/18/3/035013},
	year = 2016,
	month = {apr},
	publisher = {{IOP} Publishing},
	volume = {18},
	number = {3},
	pages = {035013},
	author = {Ivan {\v{S}}upi{\'{c}} and Remiguisz Augusiak and Alexia Salavrakos and Antonio Ac{\'{\i}}n},
	title = {Self-testing protocols based on the chained Bell inequalities},
	journal = {New J. Phys.},
	abstract = {Self-testing is a device-independent technique based on non-local correlations whose aim is to certify the effective uniqueness of the quantum state and measurements needed to produce these correlations. It is known that the maximal violation of some Bell inequalities suffices for this purpose. However, most of the existing self-testing protocols for two devices exploit the well-known Clauser–Horne–Shimony–Holt Bell inequality or modifications of it, and always with two measurements per party. Here, we generalize the previous results by demonstrating that one can construct self-testing protocols based on the chained Bell inequalities, defined for two devices implementing an arbitrary number of two-output measurements. On the one hand, this proves that the quantum state and measurements leading to the maximal violation of the chained Bell inequality are unique. On the other hand, in the limit of a large number of measurements, our approach allows one to self-test the entire plane of measurements spanned by the Pauli matrices X and Z. Our results also imply that the chained Bell inequalities can be used to certify two bits of perfect randomness.}
}

@article{sarkar,
      author={Shubhayan Sarkar and Debashis Saha and Jędrzej Kaniewski and Remigiusz Augusiak},
      year={2021},
     journal = {npj Quantum Information},
  volume = {7},
  pages = {151},
  doi = {10.1038/s41534-021-00490-3}
}

@article{Supic,
	doi = {10.1088/1367-2630/18/7/075006},
	year = 2016,
	month = {jul},
	publisher = {{IOP} Publishing},
	volume = {18},
	number = {7},
	pages = {075006},
	author = {Ivan {\v{S}}upi\'c and Matty J Hoban},
	title = {Self-testing through {EPR}-steering},
	journal = {New J. Phys.},
	abstract = {The verification of quantum devices is an important aspect of quantum information, especially with the emergence of more advanced experimental implementations of quantum computation and secure communication. Within this, the theory of device-independent robust self-testing via Bell tests has reached a level of maturity now that many quantum states and measurements can be verified without direct access to the quantum systems: interaction with the devices is solely classical. However, the requirements for this robust level of verification are daunting and require high levels of experimental accuracy. In this paper we discuss the possibility of self-testing where we only have direct access to one part of the quantum device. This motivates the study of self-testing via EPR-steering, an intermediate form of entanglement verification between full state tomography and Bell tests. Quantum non-locality implies EPR-steering so results in the former can apply in the latter, but we ask what advantages may be gleaned from the latter over the former given that one can do partial state tomography? We show that in the case of self-testing a maximally entangled two-qubit state, or ebit, EPR-steering allows for simpler analysis and better error tolerance than in the case of full device-independence. On the other hand, this improvement is only a constant improvement and (up to constants) is the best one can hope for. Finally, we indicate that the main advantage in self-testing based on EPR-steering could be in the case of self-testing multi-partite quantum states and measurements. For example, it may be easier to establish a tensor product structure for a particular party’s Hilbert space even if we do not have access to their part of the global quantum system.}
}

@article{Alex,
	doi = {10.1088/1367-2630/aa5cff},
	year = 2017,
	month = {feb},
	publisher = {{IOP} Publishing},
	volume = {19},
	number = {2},
	pages = {023043},
	author = {Alexandru Gheorghiu and Petros Wallden and Elham Kashefi},
	title = {Rigidity of quantum steering and one-sided device-independent verifiable quantum computation},
	journal = {New J. Phys.},
	abstract = {The relationship between correlations and entanglement has played a major role in understanding quantum theory since the work of Einstein et al (1935 Phys. Rev. 47 777–80). Tsirelson proved that Bell states, shared among two parties, when measured suitably, achieve the maximum non-local correlations allowed by quantum mechanics (Cirel’son 1980 Lett. Math. Phys. 4 93–100). Conversely, Reichardt et al showed that observing the maximal correlation value over a sequence of repeated measurements, implies that the underlying quantum state is close to a tensor product of maximally entangled states and, moreover, that it is measured according to an ideal strategy (Reichardt et al 2013 Nature 496 456–60). However, this strong rigidity result comes at a high price, requiring a large number of entangled pairs to be tested. In this paper, we present a significant improvement in terms of the overhead by instead considering quantum steering where the device of the one side is trusted. We first demonstrate a robust one-sided device-independent version of self-testing, which characterises the shared state and measurement operators of two parties up to a certain bound. We show that this bound is optimal up to constant factors and we generalise the results for the most general attacks. This leads us to a rigidity theorem for maximal steering correlations. As a key application we give a one-sided device-independent protocol for verifiable delegated quantum computation, and compare it to other existing protocols, to highlight the cost of trust assumptions. Finally, we show that under reasonable assumptions, the states shared in order to run a certain type of verification protocol must be unitarily equivalent to perfect Bell states.}
}

@article{Armin1,
author = {Armin Tavakoli  and Máté Farkas  and Denis Rosset  and Jean-Daniel Bancal  and Jedrzej Kaniewski },
title = {Mutually unbiased bases and symmetric informationally complete measurements in Bell experiments},
journal = {Science Advances},
volume = {7},
number = {7},
year = {2021},
doi = {10.1126/sciadv.abc3847},
    abstract = { Quantum nonlocality is developed on the basis of the two most celebrated discrete structures in quantum theory. Mutually unbiased bases (MUBs) and symmetric informationally complete projectors (SICs) are crucial to many conceptual and practical aspects of quantum theory. Here, we develop their role in quantum nonlocality by (i) introducing families of Bell inequalities that are maximally violated by d-dimensional MUBs and SICs, respectively, (ii) proving device-independent certification of natural operational notions of MUBs and SICs, and (iii) using MUBs and SICs to develop optimal-rate and nearly optimal-rate protocols for device-independent quantum key distribution and device-independent quantum random number generation, respectively. Moreover, we also present the first example of an extremal point of the quantum set of correlations that admits physically inequivalent quantum realizations. Our results elaborately demonstrate the foundational and practical relevance of the two most important discrete Hilbert space structures to the field of quantum nonlocality. }
}

@article{Jed1,
  doi = {10.22331/q-2019-10-24-198},
  title = {Maximal nonlocality from maximal entanglement and mutually unbiased bases, and self-testing of two-qutrit quantum systems},
  author = {Kaniewski, J{\k{e}}drzej and {\v{S}}upi{\'{c}}, Ivan and Tura, Jordi and Baccari, Flavio and Salavrakos, Alexia and Augusiak, Remigiusz},
  journal = {{Quantum}},
  issn = {2521-327X},
  publisher = {{Verein zur F{\"{o}}rderung des Open Access Publizierens in den Quantenwissenschaften}},
  volume = {3},
  pages = {198},
  month = oct,
  year = {2019}
}

@article{EPR,
  title = {Can Quantum-Mechanical Description of Physical Reality Be Considered Complete?},
  author = {Einstein, A. and Podolsky, B. and Rosen, N.},
  journal = {Phys. Rev.},
  volume = {47},
  issue = {10},
  pages = {777--780},
  numpages = {0},
  year = {1935},
  month = {May},
  publisher = {American Physical Society},
  doi = {10.1103/PhysRev.47.777}
}

@article{Schrod, title={Discussion of Probability Relations between Separated Systems}, volume={31}, DOI={10.1017/S0305004100013554}, number={4}, journal={Mathematical Proceedings of the Cambridge Philosophical Society}, publisher={Cambridge University Press}, author={Schrödinger, E.}, year={1935}, pages={555–563}}

@Article{sarkar5,
AUTHOR = {Borkała, Jakub J. and Jebarathinam, Chellasamy and Sarkar, Shubhayan and Augusiak, Remigiusz},
TITLE = {Device-Independent Certification of Maximal Randomness from Pure Entangled Two-Qutrit States Using Non-Projective Measurements},
JOURNAL = {Entropy},
VOLUME = {24},
YEAR = {2022},
NUMBER = {3},
ARTICLE-NUMBER = {350},
PubMedID = {35327861},
ISSN = {1099-4300},
ABSTRACT = {While it has recently been demonstrated how to certify the maximal amount of randomness from any pure two-qubit entangled state in a device-independent way, the problem of optimal randomness certification from entangled states of higher local dimension remains open. Here we introduce a method for device-independent certification of the maximal possible amount of 2log23 random bits using pure bipartite entangled two-qutrit states and extremal nine-outcome general non-projective measurements. To this aim, we exploit a device-independent method for certification of the full Weyl&ndash;Heisenberg basis in three-dimensional Hilbert spaces together with a one-sided device-independent method for certification of two-qutrit partially entangled states.},
DOI = {10.3390/e24030350}
}

@article{sarkar6,
  title = {Certification of incompatible measurements using quantum steering},
  author = {Sarkar, Shubhayan and Saha, Debashis and Augusiak, Remigiusz},
  journal = {Phys. Rev. A},
  volume = {106},
  issue = {4},
  pages = {L040402},
  numpages = {5},
  year = {2022},
  month = {Oct},
  publisher = {American Physical Society},
  doi = {10.1103/PhysRevA.106.L040402}
}

@article{Allst1,
   title={Quantum networks self-test all entangled states},
   volume={19},
   ISSN={1745-2481},
   DOI={10.1038/s41567-023-01945-4},
   number={5},
   journal={Nature Physics},
   publisher={Springer Science and Business Media LLC},
   author={Šupić, Ivan and Bowles, Joseph and Renou, Marc-Olivier and Acín, Antonio and Hoban, Matty J.},
   year={2023},
   month=feb, pages={670–675} }

@article{sarkar11,
  title = {Certification of the maximally entangled state using nonprojective measurements},
  author = {Sarkar, Shubhayan},
  journal = {Phys. Rev. A},
  volume = {107},
  issue = {3},
  pages = {032408},
  numpages = {6},
  year = {2023},
  month = {Mar},
  publisher = {American Physical Society},
  doi = {10.1103/PhysRevA.107.032408}
}

@article{sarkar12,
  title = {Self-Testing of any Pure Entangled State with the Minimal Number of Measurements and Optimal Randomness Certification in a One-Sided Device-Independent Scenario},
  author = {Sarkar, Shubhayan and Borka\l{}a, Jakub J. and Jebarathinam, Chellasamy and Makuta, Owidiusz and Saha, Debashis and Augusiak, Remigiusz},
  journal = {Phys. Rev. Appl.},
  volume = {19},
  issue = {3},
  pages = {034038},
  numpages = {20},
  year = {2023},
  month = {Mar},
  publisher = {American Physical Society},
  doi = {10.1103/PhysRevApplied.19.034038}
}

@article{Bowles1,
  title = {Sufficient criterion for guaranteeing that a two-qubit state is unsteerable},
  author = {Bowles, Joseph and Hirsch, Flavien and Quintino, Marco T\'ulio and Brunner, Nicolas},
  journal = {Phys. Rev. A},
  volume = {93},
  issue = {2},
  pages = {022121},
  numpages = {7},
  year = {2016},
  month = {Feb},
  publisher = {American Physical Society},
  doi = {10.1103/PhysRevA.93.022121}
}

@article{Werner1,
  title = {Quantum states with Einstein-Podolsky-Rosen correlations admitting a hidden-variable model},
  author = {Werner, Reinhard F.},
  journal = {Phys. Rev. A},
  volume = {40},
  issue = {8},
  pages = {4277--4281},
  numpages = {0},
  year = {1989},
  month = {Oct},
  publisher = {American Physical Society},
  doi = {10.1103/PhysRevA.40.4277}
}

@article{Fritz,
doi = {10.1088/1367-2630/14/10/103001},
year = {2012},
month = {oct},
publisher = {IOP Publishing},
volume = {14},
number = {10},
pages = {103001},
author = {Tobias Fritz},
title = {Beyond Bell's theorem: correlation scenarios},
journal = {New Journal of Physics},
abstract = {Bell's theorem witnesses that the predictions of quantum theory cannot be reproduced by theories of local hidden variables in which observers can choose their measurements independently of the source. Working out an idea of Branciard, Rosset, Gisin and Pironio, we consider scenarios which feature several sources, but no choice of measurement for the observers. Every Bell scenario can be mapped into such a correlation scenario, and Bell's theorem then discards those local hidden variable theories in which the sources are independent. However, most correlation scenarios do not arise from Bell scenarios, and we describe examples of (quantum) non-locality in some of these scenarios, while posing many open problems along the way. Some of our scenarios have been considered before by mathematicians in the context of causal inference.}
}

@article{renou1,
  title = {Genuine Quantum Nonlocality in the Triangle Network},
  author = {Renou, Marc-Olivier and B\"aumer, Elisa and Boreiri, Sadra and Brunner, Nicolas and Gisin, Nicolas and Beigi, Salman},
  journal = {Phys. Rev. Lett.},
  volume = {123},
  issue = {14},
  pages = {140401},
  numpages = {5},
  year = {2019},
  month = {Sep},
  publisher = {American Physical Society},
  doi = {10.1103/PhysRevLett.123.140401}
}

@article{renou2,
  title = {Nonlocality for Generic Networks},
  author = {Renou, Marc-Olivier and Beigi, Salman},
  journal = {Phys. Rev. Lett.},
  volume = {128},
  issue = {6},
  pages = {060401},
  numpages = {6},
  year = {2022},
  month = {Feb},
  publisher = {American Physical Society},
  doi = {10.1103/PhysRevLett.128.060401}
}

@article{renou3,
  title = {Proofs of Network Quantum Nonlocality in Continuous Families of Distributions},
  author = {Pozas-Kerstjens, Alejandro and Gisin, Nicolas and Renou, Marc-Olivier},
  journal = {Phys. Rev. Lett.},
  volume = {130},
  issue = {9},
  pages = {090201},
  numpages = {6},
  year = {2023},
  month = {Feb},
  publisher = {American Physical Society},
  doi = {10.1103/PhysRevLett.130.090201}
}

@article{supic4,
  title = {Genuine network quantum nonlocality and self-testing},
  author = {{\v{S}}upi{\'{c}}, Ivan and Bancal, Jean-Daniel and Cai, Yu and Brunner, Nicolas},
  journal = {Phys. Rev. A},
  volume = {105},
  issue = {2},
  pages = {022206},
  numpages = {10},
  year = {2022},
  month = {Feb},
  publisher = {American Physical Society},
  doi = {10.1103/PhysRevA.105.022206}
}

@article{pironio1,
  title = {Characterizing the Nonlocal Correlations Created via Entanglement Swapping},
  author = {Branciard, C. and Gisin, N. and Pironio, S.},
  journal = {Phys. Rev. Lett.},
  volume = {104},
  issue = {17},
  pages = {170401},
  numpages = {4},
  year = {2010},
  month = {Apr},
  publisher = {American Physical Society},
  doi = {10.1103/PhysRevLett.104.170401}
}

@article{swap,
  title = {``Event-ready-detectors'' Bell experiment via entanglement swapping},
  author = {Zukowski, M. and Zeilinger, A. and Horne, M. A. and Ekert, A. K.},
  journal = {Phys. Rev. Lett.},
  volume = {71},
  issue = {26},
  pages = {4287--4290},
  numpages = {0},
  year = {1993},
  month = {Dec},
  publisher = {American Physical Society},
  doi = {10.1103/PhysRevLett.71.4287}
}

@article{netstee,
  title = {Network Quantum Steering},
  author = {Jones, Benjamin D. M. and {\v{S}}upi{\'{c}}, Ivan and Uola, Roope and Brunner, Nicolas and Skrzypczyk, Paul},
  journal = {Phys. Rev. Lett.},
  volume = {127},
  issue = {17},
  pages = {170405},
  numpages = {5},
  year = {2021},
  month = {Oct},
  publisher = {American Physical Society},
  doi = {10.1103/PhysRevLett.127.170405}
}

@article{random0,
  title = {Optimal randomness certification from one entangled bit},
  author = {Ac\'{\i}n, Antonio and Pironio, Stefano and V\'ertesi, Tam\'as and Wittek, Peter},
  journal = {Phys. Rev. A},
  volume = {93},
  issue = {4},
  pages = {040102},
  numpages = {5},
  year = {2016},
  month = apr,
  publisher = {American Physical Society},
  doi = {10.1103/PhysRevA.93.040102}
}

@article{Fehr,
  title = {Security and composability of randomness expansion from {B}ell inequalities},
  author = {Fehr, Serge and Gelles, Ran and Schaffner, Christian},
  journal = {Phys. Rev. A},
  volume = {87},
  issue = {1},
  pages = {012335},
  numpages = {8},
  year = {2013},
  month = jan,
  publisher = {American Physical Society},
  doi = {10.1103/PhysRevA.87.012335}
}

@article{rand1,
  title = {Experimental investigation of partially entangled states for device-independent randomness generation and self-testing protocols},
  author = {G\'omez, S. and Mattar, A. and Machuca, I. and G\'omez, E. S. and Cavalcanti, D. and Far\'{\i}as, O. Jim\'enez and Ac\'{\i}n, A. and Lima, G.},
  journal = {Phys. Rev. A},
  volume = {99},
  issue = {3},
  pages = {032108},
  numpages = {6},
  year = {2019},
  month = mar,
  publisher = {American Physical Society},
  doi = {10.1103/PhysRevA.99.032108}
}

@article{rand2,
  title = {Device-independent certification of two bits of randomness from one entangled bit and Gisin's elegant Bell inequality},
  author = {Andersson, Ole and Badzi\k{a}g, Piotr and Dumitru, Irina and Cabello, Ad\'an},
  journal = {Phys. Rev. A},
  volume = {97},
  issue = {1},
  pages = {012314},
  numpages = {5},
  year = {2018},
  month = jan,
  publisher = {American Physical Society},
  doi = {10.1103/PhysRevA.97.012314}
}

@article{rand3,
  title = {Unbounded randomness certification using sequences of measurements},
  author = {Curchod, F. J. and Johansson, M. and Augusiak, R. and Hoban, M. J. and Wittek, P. and Ac\'{\i}n, A.},
  journal = {Phys. Rev. A},
  volume = {95},
  issue = {2},
  pages = {020102},
  numpages = {5},
  year = {2017},
  month = feb,
  publisher = {American Physical Society},
  doi = {10.1103/PhysRevA.95.020102}
}

@article{APP13,
	doi = {10.1088/1367-2630/16/1/013035},
	year = 2014,
	month = jan,
	publisher = {{IOP} Publishing},
	volume = {16},
	number = {1},
	pages = {013035},
	author = {O Nieto-Silleras and S Pironio and J Silman},
	title = {Using complete measurement statistics for optimal device-independent randomness evaluation},
	journal = {New Journal of Physics},
	abstract = {The majority of recent works investigating the link between non-locality and randomness, e.g. in the context of device-independent cryptography, do so with respect to some specific Bell inequality, usually the CHSH inequality. However, the joint probabilities characterizing the measurement outcomes of a Bell test are richer than just the degree of violation of a single Bell inequality. In this work we show how to take this extra information into account in a systematic manner in order to optimally evaluate the randomness that can be certified from non-local correlations. We further show that taking into account the complete set of outcome probabilities is equivalent to optimizing over all possible Bell inequalities, thereby allowing us to determine the optimal Bell inequality for certifying the maximal amount of randomness from a given set of non-local correlations.}
}

@misc{sarkar2023,
      title={Self-testing composite measurements and bound entangled state in a unified framework}, 
      author={Shubhayan Sarkar and Chandan Datta and Saronath Halder and Remigiusz Augusiak},
      year={2023},
      eprint={2301.11409},
      archivePrefix={arXiv},
      primaryClass={quant-ph},
      doi = {10.48550/arXiv.2301.11409}
     }

@article{Marco,
  title = {Self-Testing Entangled Measurements in Quantum Networks},
  author = {Renou, Marc-Olivier and Kaniewski, Jedrzej and Brunner, Nicolas},
  journal = {Phys. Rev. Lett.},
  volume = {121},
  issue = {25},
  pages = {250507},
  numpages = {6},
  year = {2018},
  month = {Dec},
  publisher = {American Physical Society},
  doi = {10.1103/PhysRevLett.121.250507}
}

@article{NLWEsupic,
   title={Self-testing nonlocality without entanglement},
   volume={107},
   ISSN={2469-9934},
   DOI={10.1103/physreva.107.062220},
   number={6},
   journal={Physical Review A},
   publisher={American Physical Society (APS)},
   author={{\v{S}}upi{\'{c}}, Ivan and Brunner, Nicolas},
   year={2023},
   month=jun }

@article{JW2,
  title = {Robust self-testing of multipartite {G}reenberger-{H}orne-{Z}eilinger-state measurements in quantum networks},
  author = {Zhou, Qing and Xu, Xin-Yu and Zhao, Shuai and Zhen, Yi-Zheng and Li, Li and Liu, Nai-Le and Chen, Kai},
  journal = {Phys. Rev. A},
  volume = {106},
  issue = {4},
  pages = {042608},
  numpages = {10},
  year = {2022},
  month = {Oct},
  publisher = {American Physical Society},
  doi = {10.1103/PhysRevA.106.042608}
}

@article{Sarkar_2024,
   title={Causal links between operationally independent events in quantum theory},
   volume={109},
   ISSN={2469-9934},
   DOI={10.1103/physreva.109.l040202},
   number={4},
   journal={Physical Review A},
   publisher={American Physical Society (APS)},
   author={Sarkar, Shubhayan},
   year={2024},
   month=apr }

@misc{sarkar2024witnessing,
      title={Witnessing network steerability of every bipartite entangled state without inputs}, 
      author={Shubhayan Sarkar},
      year={2024},
      eprint={2406.11994},
      archivePrefix={arXiv},
      primaryClass={quant-ph},
      doi = {10.48550/arXiv.2406.11994}
}

@article{sekatski,
  title = {Partial Self-Testing and Randomness Certification in the Triangle Network},
  author = {Sekatski, Pavel and Boreiri, Sadra and Brunner, Nicolas},
  journal = {Phys. Rev. Lett.},
  volume = {131},
  issue = {10},
  pages = {100201},
  numpages = {7},
  year = {2023},
  month = {Sep},
  publisher = {American Physical Society},
  doi = {10.1103/PhysRevLett.131.100201}
}

@article{Bellexp1,
  title = {Experimental Test of Bell's Inequalities Using Time-Varying Analyzers},
  author = {Aspect, Alain and Dalibard, Jean and Roger, G\'erard},
  journal = {Phys. Rev. Lett.},
  volume = {49},
  issue = {25},
  pages = {1804--1807},
  numpages = {0},
  year = {1982},
  month = {Dec},
  publisher = {American Physical Society},
  doi = {10.1103/PhysRevLett.49.1804}
}

@article{Bellexp2,
  title = {Experimental Tests of Realistic Local Theories via Bell's Theorem},
  author = {Aspect, Alain and Grangier, Philippe and Roger, G\'erard},
  journal = {Phys. Rev. Lett.},
  volume = {47},
  issue = {7},
  pages = {460--463},
  numpages = {0},
  year = {1981},
  month = {Aug},
  publisher = {American Physical Society},
  doi = {10.1103/PhysRevLett.47.460}
}

@article{Bellexp3,
  title = {Significant-Loophole-Free Test of Bell's Theorem with Entangled Photons},
  author = {Giustina, Marissa and Versteegh, Marijn A. M. and Wengerowsky, S\"oren and Handsteiner, Johannes and Hochrainer, Armin and Phelan, Kevin and Steinlechner, Fabian and Kofler, Johannes and Larsson, Jan-\AA{}ke and Abell\'an, Carlos and Amaya, Waldimar and Pruneri, Valerio and Mitchell, Morgan W. and Beyer, J\"orn and Gerrits, Thomas and Lita, Adriana E. and Shalm, Lynden K. and Nam, Sae Woo and Scheidl, Thomas and Ursin, Rupert and Wittmann, Bernhard and Zeilinger, Anton},
  journal = {Phys. Rev. Lett.},
  volume = {115},
  issue = {25},
  pages = {250401},
  numpages = {7},
  year = {2015},
  month = {Dec},
  publisher = {American Physical Society},
  doi = {10.1103/PhysRevLett.115.250401}
}

@article{Bellexp4,
  title = {Strong Loophole-Free Test of Local Realism},
  author = {Shalm, Lynden K. and Meyer-Scott, Evan and Christensen, Bradley G. and Bierhorst, Peter and Wayne, Michael A. and Stevens, Martin J. and Gerrits, Thomas and Glancy, Scott and Hamel, Deny R. and Allman, Michael S. and Coakley, Kevin J. and Dyer, Shellee D. and Hodge, Carson and Lita, Adriana E. and Verma, Varun B. and Lambrocco, Camilla and Tortorici, Edward and Migdall, Alan L. and Zhang, Yanbao and Kumor, Daniel R. and Farr, William H. and Marsili, Francesco and Shaw, Matthew D. and Stern, Jeffrey A. and Abell\'an, Carlos and Amaya, Waldimar and Pruneri, Valerio and Jennewein, Thomas and Mitchell, Morgan W. and Kwiat, Paul G. and Bienfang, Joshua C. and Mirin, Richard P. and Knill, Emanuel and Nam, Sae Woo},
  journal = {Phys. Rev. Lett.},
  volume = {115},
  issue = {25},
  pages = {250402},
  numpages = {10},
  year = {2015},
  month = {Dec},
  publisher = {American Physical Society},
  doi = {10.1103/PhysRevLett.115.250402}
}

@article{Pironio22,
  title = {Bilocal versus nonbilocal correlations in entanglement-swapping experiments},
  author = {Branciard, Cyril and Rosset, Denis and Gisin, Nicolas and Pironio, Stefano},
  journal = {Phys. Rev. A},
  volume = {85},
  issue = {3},
  pages = {032119},
  numpages = {21},
  year = {2012},
  month = {Mar},
  publisher = {American Physical Society},
  doi = {10.1103/PhysRevA.85.032119}
}

\section{Appendix}
\appendix

\section{Self-testing}\label{A}
\setcounter{fakt}{4}

In quantum theory, it is advantageous to express the correlations $\{p(a,b)\}$in terms of expectation values rather than probability distributions. When dealing with $d$-outcome measurements, a useful technique is to utilize the two-dimensional Fourier transform of the conditional probabilities $p(a,b)$ as
\begin{equation}\label{ExpValues}
    \langle A^{(k)}_0 B^{(l)}_0 \rangle = \sum^{d-1}_{a,b=0} \w^{ak+bl} p(a,b),
\end{equation}
where $\w$ is the $d$-th root of unity $\omega=\exp(2\pi\mathbbm{i}/d)$ and $k,l=0,\ldots,d-1$ and $ A^{(k)}_0,B^{(l)}_0$ are known as observables. 
Using the inverse Fourier transform of \eqref{ExpValues}, we obtain that
\begin{eqnarray}\label{ExpValues1}
    p(a,b)=\frac{1}{d^2}\sum_{k,l=0}^{d-1}\omega^{-(ak+bl)} \langle A^{(k)}_0 B^{(l)}_0 \rangle.
\end{eqnarray}

The expectation value appearing on the left-hand side of Eq. (\ref{ExpValues}) can be simply represented as $\langle A^{(k)}_0 B^{(l)}_0 \rangle = \Tr(A^{(k)}_0 \otimes B^{(l)}_0\rho_{AB})$ for some state $\rho_{AB}$ with $\{A^{(k)}_0\}$ and $\{B^{(l)}_0\}$ are operators defined as
\begin{equation}\label{obsgen}
    A^{(k)}_0= \sum^{d-1}_{a=0} \w^{ak} P^{(a)}, \qquad B^{(l)}_0 = \sum^{d-1}_{b=0} \w^{bl} Q^{(b)}.
\end{equation}
where $P^{(a)},Q^{(b)}$ represent the measurement elements of Alice, Bob respectively.
As proven in \cite{Jed1}, the observables $ A^{(k)}_0$ have the following properties (same for $B^{(l)}_0$): $A^{(d-k)}_0=(A^{(k)}_0)^{\dagger}$ and $A^{(k)}_0(A^{(k)}_0)^{\dagger}\leq\I$. For the special case of projective measurements, the observables $A^{(k)}_0$ are unitary and $ A^{(k)}_0= (A^{(1)}_0)^k=A^k_0$.
As Alice performs the Bell-basis measurement whose corresponding measurement elements for the rest of the manuscript will be denoted as $\ket{\phi_1}=\ket{\phi^+},\ket{\phi_2}=\ket{\psi^+},\ket{\phi_3}=\ket{\phi^-}, \ket{\phi_4}=\ket{\psi^-}$ and the corresponding observable using \eqref{obsgen} is given as
\begin{eqnarray}\label{A0}
    A_0=\sum_{k=1}^4\mathbbm{i}^{k}\proj{\phi_k}.
\end{eqnarray}

Let us first revisit the swap-steering inequality \eqref{steein} and then using \eqref{ExpValues1}, the above steering inequality can be simply represented as
\begin{eqnarray}\label{steein1}
  W= \frac{1}{4}\sum_{k=0}^3 \langle A_0^k\otimes B_0^{(4-k)} \rangle\leq \beta_{LHS}
\end{eqnarray}
The quantum bound of the above steering inequality is $1$ which is also the maximum algebraic value of $W$. Consequently, we observe from \eqref{steein1} that the maximum value can be attained iff each term is $1$, that is, for $k=0,1,2,3$
\begin{eqnarray}
    \langle A_0^k\otimes B_0^{(4-k)} \rangle=1.
\end{eqnarray}
Now, using Cauchy-Schwarz inequality we get that
\begin{eqnarray}\label{SOS3}
     A_0^k\otimes B_0^{(4-k)}\rho_{AB}=\rho_{AB}.
\end{eqnarray}
Recalling that $\rho_{AB}$ is separable, we can express it as $\rho_{AB}=\sum_jp_j\ \rho^j_{A_1B_1}\otimes\rho^j_{A_2B_2}$ which using its eigendecomposition can be expressed as $\rho_{AB}=\sum_{s,s'}p_{s,s'}\proj{\psi_{s,A_1B_1}}\otimes\proj{\psi_{s',A_2B_2}}$. Consequently, we get from the above expression Eq. \eqref{SOS2} that
\begin{equation}\label{SOS1}
    \sum_{s,s'}p_{s,s'} A_0^k\otimes B_0^{(4-k)}\ \psi^1_s\otimes\psi_{s'}^2=\sum_{s,s'}p_{s,s'}\ \psi^1_s\otimes\psi_{s'}^2
\end{equation}
where for simplicity, we represent the states $\proj{\psi_{s,A_iB_i}}$ as $\psi^i_s$.
It is now straightforward to observe from the above relation that for all $s,s'$
\begin{eqnarray}\label{SOS2}
     A_0^k\otimes \overline{B}_{0,ss'}^{4-k}\ \ket{\psi^1_s}\ket{\psi^2_{s'}}=\ket{\psi^1_s}\ket{\psi^2_{s'}}
\end{eqnarray}
Here $\overline{B}_{0,ss'}$ is the projection of $B_0$ on the support of $\Tr_A\psi^1_s\otimes\Tr_A\psi^2_{s'}$.
The above relations are sufficient to self-test the state $\rho_{AB}$ and Bob's measurement $B_0$. Before proceeding toward the self-testing result, it is important to recall the assumption that the local states are full-rank as the measurements can only be characterized on the local support of the states.
For a note, we closely follow the techniques introduced in \cite{sarkar6}. 

\setcounter{fakt}{4}
\begin{fakt}\label{Theo1M} 
Assume that the steering inequality \eqref{steein1},  with trusted Alice choosing the observable $A_0$ \eqref{A0}, is maximally violated by a separable state $\rho_{AB}$ acting on $\mathbbm{C}^2\otimes\mathbbm{C}^2\otimes\mathcal{H}_B$ and Bob's observable $B_0$. Then, the following statements hold true:
\\
\\
1. Bob's measurement is projective with his Hilbert space decomposing as $\mathcal{H}_{B}=(\mathbbm{C}^2)_{B_1'}\otimes(\mathbbm{C}^2)_{B_2'}\otimes \mathcal{H}_{B_{12}''}$ for some auxiliary Hilbert space $\mathcal{H}_{B''_{12}}=\mathcal{H}_{B''_{1}}\otimes \mathcal{H}_{B''_{2}}$.\\
\\
2.  \ \  There exist unitary transformations, $U_{i}:\mathcal{H}_B\rightarrow\mathcal{H}_B$,  such that
\begin{eqnarray}\label{lem1.2}
(\mathbbm{1}_{A}\otimes U_B)\rho_{AB}(\mathbbm{1}_{A}\otimes U_B^{\dagger})\qquad\qquad\qquad\qquad\nonumber\\=\proj{\phi^+}_{A_1B_1'}\otimes\proj{\phi^+}_{A_2B_2'}\otimes \rho_{B_1''B_2''},
\end{eqnarray}
where $B_i''$ denotes Bob's auxiliary system, and 
\begin{eqnarray}\label{lem1.1}
\quad U_B\,B_0\,U_B^{\dagger}=A_0\otimes \mathbbm{1}_{B_1''B_2''}
\end{eqnarray}
where $U_B=U_1\otimes U_2$.
\end{fakt}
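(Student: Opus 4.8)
The plan is to take as the starting point the maximal-violation identity \eqref{SOS2}, which already isolates, for every product eigenvector $\ket{\psi^1_s}\ket{\psi^2_{s'}}$ of the separable state, the relation $A_0^k\otimes\overline{B}_{0,ss'}^{4-k}\ket{\psi^1_s}\ket{\psi^2_{s'}}=\ket{\psi^1_s}\ket{\psi^2_{s'}}$ for $k=0,1,2,3$. First I would establish that $\overline{B}_{0,ss'}$ is unitary on the local support, so that Bob's measurement is projective with $\overline{B}_{0,ss'}^4=\I$ and its four eigenvalues are the fourth roots of unity. This uses the standard contraction bound $B_0^{(l)}(B_0^{(l)})^{\dagger}\le\I$ quoted before \eqref{A0}, together with the fact that each term in $W=\tfrac14\sum_k\langle A_0^k\otimes B_0^{(4-k)}\rangle=1$ saturates Cauchy--Schwarz while $\|A_0^k\|=1$. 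This already yields the projectivity claim of statement 1, and lets me replace $\overline{B}_0^{3}$ by $\overline{B}_0^{\dagger}$.

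Next I would expand Alice's two-qubit part of each product eigenstate in the Bell basis, writing $\ket{\psi^1_s}\ket{\psi^2_{s'}}=\sum_{k=1}^{4}\ket{\phi_k}_{A_1A_2}\ket{\xi_k}_{B_1B_2}$. Taking $k=1$ in \eqref{SOS2} and using $\overline{B}_0^{3}=\overline{B}_0^{\dagger}$, the relation collapses to $(A_0\otimes\I)\ket{\psi^1_s}\ket{\psi^2_{s'}}=(\I\otimes\overline{B}_0)\ket{\psi^1_s}\ket{\psi^2_{s'}}$; matching coefficients in the orthonormal Bell basis gives the eigenvector correspondence $\overline{B}_0\ket{\xi_k}=\mathbbm{i}^{k}\ket{\xi_k}$. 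Since $\overline{B}_0$ is unitary and the eigenvalues $\mathbbm{i},-1,-\mathbbm{i},1$ are distinct, the four vectors $\ket{\xi_k}$ are mutually orthogonal. This correspondence is the bridge to statement 2: it is precisely what will force $B_0$ to become $A_0$ after the rotation.

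Then, invoking the separability structure, I would write the local eigenstates in the form $\ket{\psi^1_s}=\ket{0}_{A_1}\ket{u_0}+\ket{1}_{A_1}\ket{u_1}$ and $\ket{\psi^2_{s'}}=\ket{0}_{A_2}\ket{v_0}+\ket{1}_{A_2}\ket{v_1}$, so that each $\ket{\xi_k}$ is an explicit combination of the $\ket{u_i}\ket{v_j}$. Imposing the six orthogonality relations among the $\ket{\xi_k}$, together with the normalizations $\|u_0\|^2+\|u_1\|^2=\|v_0\|^2+\|v_1\|^2=1$, reduces to a small system of scalar equations whose real parts give $\|u_0\|=\|u_1\|=\|v_0\|=\|v_1\|=\tfrac1{\sqrt2}$ and whose cross relations give $\langle u_0|u_1\rangle=\langle v_0|v_1\rangle=0$. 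Equivalently, every $\ket{\psi^1_s}$ (resp. $\ket{\psi^2_{s'}}$) is maximally entangled across the cut $A_1|B_1$ (resp. $A_2|B_2$) on a two-dimensional local support. This is the computational heart of the argument and is routine once the orthogonality relations are written down.

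Finally I would define local unitaries $U_i$ on $\mathcal{H}_{B_i}$ that rotate the two-dimensional Schmidt support of each local state onto a fixed qubit factor $(\mathbbm{C}^2)_{B_i'}$, inducing the decomposition $\mathcal{H}_{B_i}=(\mathbbm{C}^2)_{B_i'}\otimes\mathcal{H}_{B_i''}$ of statement 1, and verify that $U_B=U_1\otimes U_2$ sends the full separable state to $\proj{\phi^+}_{A_1B_1'}\otimes\proj{\phi^+}_{A_2B_2'}\otimes\rho_{B_1''B_2''}$ while carrying $B_0$ to $A_0\otimes\I_{B_1''B_2''}$, the latter being forced by the eigenvector correspondence $\overline{B}_0\ket{\xi_k}=\mathbbm{i}^{k}\ket{\xi_k}$. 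I expect the main obstacle to be precisely this global assembly: showing that one product unitary $U_1\otimes U_2$ works simultaneously for all components $(s,s')$ of the mixture, that the various two-dimensional supports combine into a single consistent junk register $\mathcal{H}_{B_{12}''}$, and that $U_i$ genuinely acts only on $\mathcal{H}_{B_i}$ so as to respect the bipartition. The full-rank assumption on the local states and the $A_1B_1|A_2B_2$ separability are exactly what make this alignment consistent across the mixture; the per-component steps above are comparatively mechanical.
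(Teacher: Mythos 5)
Your proposal is correct, and its central step takes a genuinely different route from the paper's. To show that each pure component $\ket{\psi^1_s}\ket{\psi^2_{s'}}$ is a product of two maximally entangled pairs, the paper rotates each Schmidt-decomposed component into the form $(P_i\otimes\mathbbm{1})\ket{\phi^+}$ with $P_i$ positive and full rank, applies the transpose trick $R\otimes Q\ket{\phi^+}=RQ^T\otimes\mathbbm{1}\ket{\phi^+}$ to the stabilization relation \eqref{SOS2} to get $\tilde{B}_0^T=(P_1^{-1}\otimes P_2^{-1})A_0(P_1\otimes P_2)$, and then extracts the commutation relation $[A_0,(P_1\otimes P_2)^2]=0$, concluding $P_1=P_2=\mathbbm{1}$ because $A_0$ has an entangled eigenbasis while $P_1\otimes P_2$ has a product one. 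You instead expand Alice's two qubits in the Bell basis, derive the eigenvalue correspondence $\overline{B}_0\ket{\xi_k}=\mathbbm{i}^k\ket{\xi_k}$, and let unitarity of $\overline{B}_0$ force the $\ket{\xi_k}$ to be mutually orthogonal; the resulting scalar system does force maximal entanglement (summing the four relations $\langle\xi_1|\xi_2\rangle=\langle\xi_1|\xi_4\rangle=\langle\xi_3|\xi_2\rangle=\langle\xi_3|\xi_4\rangle=0$ with appropriate signs gives $\langle u_0|u_1\rangle=\langle v_0|v_1\rangle=0$, after which $\langle\xi_1|\xi_3\rangle=\langle\xi_2|\xi_4\rangle=0$ pins all four Schmidt weights to $1/2$). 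Your route is arguably more unified: the paper itself needs exactly this Bell-expansion/orthogonality machinery (its vectors $\ket{g^i_{ss'}}$ in \eqref{g1} and relations \eqref{SOS5}) for the cross-component part of the proof, so you are running one tool for both the single-component and multi-component statements, whereas the paper's commutator argument buys a simultaneous, one-shot identification of the component state and of Bob's operator on its support.

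Two caveats. First, the correspondence $\overline{B}_0\ket{\xi_k}=\mathbbm{i}^k\ket{\xi_k}$ constrains only nonzero $\ket{\xi_k}$; some $\ket{\xi_k}$ vanishes precisely when a component is product across the cut $A_i|B_i$, and you must exclude this case explicitly (it produces two nonzero parallel $\ket{\xi_k}$'s with distinct eigenvalues, contradicting orthogonality); this is the analogue of the paper's remark that $P_i$ is full rank because states separable between Alice and Bob cannot violate \eqref{steein}. Second, the global assembly you flag as the main obstacle is real, but it is resolved by the tool you already hold: the vectors $\ket{\xi_k^{ss'}}$ of \emphalt{all} components are eigenvectors of the single global unitary $B_0$ with eigenvalue $\mathbbm{i}^k$, hence vectors with different $k$ are orthogonal across components, and combining this with the orthogonality of distinct eigenvectors of $\rho_{AB}$ yields mutually orthogonal local supports $V_{1,s}\otimes V_{2,s'}$ and the decomposition \eqref{block1}; this is precisely how the paper proceeds before deferring the final construction of $U_i=\bigoplus_s U_{s,i}$ to Theorem 1.2 of \cite{sarkar6}, so your plan closes in the same way.
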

\begin{proof}
Let us first show that Bob's measurement is projective. For this purpose, we consider the relations \eqref{SOS3} for $k=1$ and then multiply it with $A_0^3\otimes B_0$ to obtain
\begin{eqnarray}\label{Bobmea1}
    \I_A\otimes B_0B_0^{(3)}\ \rho_{AB}=A_0^3\otimes B_0\ \rho_{AB}
\end{eqnarray}
where we used the fact that $A_0^4=\I_A$. 
Notice that the right-hand side of the above expression \eqref{Bobmea1} can be simplified using the relation \eqref{SOS3} for $k=3$ to obtain
\begin{eqnarray}
     \I_A\otimes B_0B_0^{(3)}\ \rho_{AB}= \rho_{AB}.
\end{eqnarray}
Thus, taking a partial trace over Alice's subsystem and recalling that $B_0^{(3)}=B_0^{\dagger}$ gives us
\begin{eqnarray}
B_0B_0^{\dagger}\ \rho_B=\rho_B
\end{eqnarray}
where $\rho_B=\Tr_B\  \rho_{AB}$. As the local states are full-rank, they are invertible too and consequently one can arrive at
\begin{eqnarray}
    B_0B_0^{\dagger}=\I_B.
\end{eqnarray}
Similarly, one can also find that $B_0^{\dagger}B_0=\I_B$. Both these relations of Bob's observable suggest that the observable $B_0$ and unitary, and thus Bob's measurement is projective. In a similar manner, considering the relation \eqref{SOS2} one can observe that  $\overline{B}_{0,ss'}$ for all $s,s'$ are unitary.

Let us now consider the relation Eq. \eqref{SOS2} and characterize the states $\ket{\psi^1_s},\ket{\psi^2_{s'}}$ that satisfy the relation \eqref{SOS2}. For simplicity, we drop the indices $s,s'$ for now.
As the local states on Alice's side belong to $\mathbb{C}^2$, using Schmidt decomposition we represent  $\ket{\psi^1},\ket{\psi^2}$ as
\begin{eqnarray}
\ket{\psi^i}=\sum_{j=0,1}\lambda_{j,i}\ket{e_{j,i}}\ket{f_{j,i}}
\end{eqnarray}
where $\lambda_{j,i}\geq0$ and $\{\ket{e_{j,i}}\},\{\ket{f_{j,i}}\}$ form an orthonormal basis for each $i$.
Now applying a unitary $U_{i}$ on these states such that $U_{i}\ket{f_{j,i}}=\ket{e^*_{j,i}}$ gives us
\begin{eqnarray}
\ket{\tilde{\psi}^i}=U_{i}\ket{\psi^i}=\sum_{j=0,1}\lambda_{j,i}\ket{e_{j,i}}\ket{e^*_{j,i}}.
\end{eqnarray}
Now, notice that the state on the right-hand side can be represented as
\begin{eqnarray}\label{state1}
\ket{\tilde{\psi}^i}= P_{i}\otimes\I_{B_i}\ket{\phi^+}
\end{eqnarray}
where
\begin{eqnarray}\label{P}
    P_i=\sqrt{2}\sum_{j=0,1}\lambda_{j,i}\proj{e_{j,i}}.
\end{eqnarray}
Notice that $P_i$ is full-rank as states that are separable between Alice and Bob can not violate the swap-steering inequality \eqref{steein}.
Putting the state \eqref{state1} in the relation \eqref{SOS2} gives us
\begin{eqnarray}\label{st22}
    A_0(P_1\otimes P_2) \otimes \tilde{B}_0^{\dagger}\ket{\phi^+}\ket{\phi^+}=P_1\otimes P_2 \ket{\phi^+}\ket{\phi^+}
\end{eqnarray}
where $\tilde{B}_0=U_{1}^{\dagger}\otimes U_2^{\dagger}\ \overline{B}_0\ U_{1}\otimes U_2$.
Now, using the fact that
\begin{eqnarray}
\ket{\phi^+}_{A_1B_1}\ket{\phi^+}_{A_2B_2}=\ket{\phi^+_4}_{A_1A_2|B_1B_2}
\end{eqnarray}
where $\ket{\phi^+_4}$ is the maximally entangled state of local dimension four. This allows us to conclude from \eqref{st22} that
\begin{eqnarray}
   (P_1^{-1}\otimes P_2^{-1}) A_0(P_1\otimes P_2) \otimes \tilde{B}_0^{\dagger}\ \ket{\phi^+_4}=\ket{\phi^+_4}.
\end{eqnarray}
Now, using the fact that $R\otimes Q\ket{\phi^+}=RQ^T\otimes \I\ket{\phi^+}$, where $T$ denotes the transpose in the computational basis, gives us
\begin{eqnarray}
     (P_1^{-1}\otimes P_2^{-1}) A_0(P_1\otimes P_2) \tilde{B}_0^{*}\otimes\I_B\ket{\phi^+_4}=\ket{\phi^+_4}.
\end{eqnarray}
Taking the partial trace over $B's$ subsystem allows us to conclude that
\begin{eqnarray}
     (P_1^{-1}\otimes P_2^{-1}) A_0(P_1\otimes P_2) \tilde{B}_0^{*}=\I_A
\end{eqnarray}
which eventually leads us to Bob's measurement being
\begin{eqnarray}\label{Bmea1}
    \tilde{B}^T_0= (P_1^{-1}\otimes P_2^{-1}) A_0(P_1\otimes P_2).
\end{eqnarray}
As $\tilde{B}_0$ is unitary and $P_1,P_2$ are Hermitian, we get from the above condition that
\begin{eqnarray}
     (P_1^{-1}\otimes P_2^{-1}) A_0(P_1\otimes P_2)^2A_0^{\dagger}(P_1^{-1}\otimes P_2^{-1})=\I_A.
\end{eqnarray}
Rearranging the terms we obtain that
\begin{eqnarray}
     A_0(P_1\otimes P_2)^2=(P_1\otimes P_2)^2A_0
\end{eqnarray}
which is equivalent to 
\begin{eqnarray}
    [A_0,(P_1\otimes P_2)^2]=0.
\end{eqnarray}
Now, notice that if two matrices commute then they share the same basis. However, the matrix $A_0$ has an entangled basis and the matrix $P_1\otimes P_2$ have a product basis. Thus, the only instance for these two matrices to commute is when $P_1\otimes P_2=\I$ which imposes that $P_1=P_2=\I$. Going back to Eq. \eqref{state1} allows us to conclude that the states $\ket{\psi^1},\ket{\psi^2}$ are the maximally entangled state, that is,
\begin{eqnarray}
   \I_A\otimes U_{i}\ket{\psi^i}= \ket{\phi^+}\quad i=1,2
\end{eqnarray}
and Bob's measurement using \eqref{Bmea1} is
\begin{eqnarray}
  U_{1}^{\dagger}\otimes U_2^{\dagger}\ \overline{B}_0\ U_{1}\otimes U_2=A_0^T=A_0.
\end{eqnarray}

Let us now bring back the indices $s,s'$ and rewrite the states and measurements as
\begin{eqnarray}
\ket{\psi^i_{s}}=\frac{1}{\sqrt{2}}\sum_{j=0,1}\ket{j}\ket{f_{j,i,s}}
\end{eqnarray}
where $U_{s,i}^{\dagger}\ket{j}=\ket{f_{j,i,s}}$ and 
\begin{eqnarray}
     \overline{B}_{0,ss'}=U_{s,1}\otimes U_{s',2}\ A_0\ U_{s,1}^{\dagger}\otimes U_{s',2}^{\dagger}
\end{eqnarray}
for all $s,s'$. From Theorem 1.1 of \cite{sarkar6}, we can express $B_0$ as 
\begin{eqnarray}
 B_0=\overline{B}_{0,ss'}\oplus E_{ss'}
\end{eqnarray}
where $E_{ss'}$ are unitary matrices.

Let us now denote Bob's local support of the states $\ket{\psi^i_s}$ as $V_{i,s}=\ $span$\{\proj{f_{0,i,s}},\proj{f_{1,i,s}}\}$ for all $i,s$. Further on, we will show that the supports $V_{i,l}, V_{i,l'}$ are orthogonal for any $l,l'$. For this purpose, we first express the product of the states $\ket{\psi^1_s}\ket{\psi^2_{s'}}$ as
\begin{eqnarray}
\ket{\psi^1_s}\ket{\psi^2_{s'}}=\frac{1}{2}\sum_{i,j=0,1}\ket{ij}\ket{f_{i,1,s}}\ket{f_{j,2,s'}}
\end{eqnarray}
which can equivalently be expressed using the Bell basis as
\begin{eqnarray}\label{STATE3}
\ket{\psi^1_s}\ket{\psi^2_{s'}}=\frac{1}{2}\sum_{i=1}^4\ket{\phi_i}\ket{g^i_{ss'}}
\end{eqnarray}
where $\ket{\phi_i}$ are given just above Eq. \eqref{A0} and 
\begin{eqnarray}\label{g1}
    \ket{g^1_{ss'}}=\frac{1}{\sqrt{2}}\left(\ket{f_{0,1,s}}\ket{f_{0,2,s'}}+\ket{f_{1,1,s}}\ket{f_{1,2,s'}}\right)\nonumber\\
    \ket{g^2_{ss'}}=\frac{1}{\sqrt{2}}\left(\ket{f_{0,1,s}}\ket{f_{1,2,s'}}+\ket{f_{1,1,s}}\ket{f_{0,2,s'}}\right)\nonumber\\
    \ket{g^3_{ss'}}=\frac{1}{\sqrt{2}}\left(\ket{f_{0,1,s}}\ket{f_{0,2,s'}}-\ket{f_{1,1,s}}\ket{f_{1,2,s'}}\right)\nonumber\\
    \ket{g^4_{ss'}}=\frac{1}{\sqrt{2}}\left(\ket{f_{0,1,s}}\ket{f_{1,2,s'}}-\ket{f_{1,1,s}}\ket{f_{0,2,s'}}\right)
\end{eqnarray}
Let us again utilize the relation \eqref{SOS2} and apply the state \eqref{STATE3} to it to observe that
\begin{eqnarray}
\sum_{i=1}^4\omega^i\ket{\phi_i}B_0^3\ket{g^i_{ss'}}=\sum_{i=1}^4\ket{\phi_i}\ket{g^i_{ss'}}.
\end{eqnarray}
Multiplying with $\bra{\phi_i}$ on both sides of the above expression gives us
\begin{eqnarray}\label{SOS4}
\omega^iB_0^3\ket{g^i_{ss'}}=\ket{g^i_{ss'}}\qquad \forall i.
\end{eqnarray}
As $B_0$ is unitary, we can conclude from the above formula \eqref{SOS4} that
\begin{eqnarray}\label{SOS5}
   \langle g^j_{ll'} \ket{g^i_{ss'}}=0 \qquad i\ne j
\end{eqnarray}
for any $i,j,l,l',s,s'$. Let us now consider Eq. \eqref{SOS5} with $l=s, j=1$ and expand it using \eqref{g1} to obtain the following conditions for $i=2,3,4$ as
\begin{subequations}
\begin{equation}\label{abcd1}
\langle{f_{0,2,l'}}\ket{f_{0,2,s'}}-\langle{f_{1,2,l'}}\ket{f_{1,2,s'}}=0
\end{equation}
\begin{equation}\label{abcd2}
\langle{f_{0,2,l'}}\ket{f_{1,2,s'}}+\langle{f_{1,2,l'}}\ket{f_{0,2,s'}}=0
\end{equation}
\begin{equation}\label{abcd3}
\langle{f_{0,2,l'}}\ket{f_{1,2,s'}}-\langle{f_{1,2,l'}}\ket{f_{0,2,s'}}=0.
\end{equation}
\end{subequations}
From Eqs. \eqref{abcd2} and \eqref{abcd3}, it is straightforward to observe that $\langle{f_{0,2,l'}}\ket{f_{1,2,s'}}=\langle{f_{1,2,l'}}\ket{f_{0,2,s'}}=0$. Let us now recall that $\ket{\psi^2_{l'}}$ and $\ket{\psi^2_{s'}}$ are orthogonal as they correspond to two different eigenvectors of $\rho_{AB}$ which gives us an additional condition
\begin{eqnarray}\label{abcd4}
\langle{f_{0,2,l'}}\ket{f_{0,2,s'}}+\langle{f_{1,2,l'}}\ket{f_{1,2,s'}}=0.
\end{eqnarray}
It is again straightforward to observe from \eqref{abcd1} and \eqref{abcd4} that $\langle{f_{0,2,l'}}\ket{f_{0,2,s'}}=\langle{f_{1,2,l'}}\ket{f_{1,2,s'}}=0$. Thus, the local supports $V_{2,s'}$ and $V_{2,l'}$ are orthogonal for any $s',l'$ such that $s'\ne l'$. Proceeding the same way as above, we can also conclude that the local supports $V_{1,s}$ and $V_{1,l}$ are orthogonal for any $s,l$ such that $s\ne l$. Consequently, the local supports $V_{ss'}=V_{1,s}\otimes V_{2,s'}$ are mutually orthogonal for any $s,s'$.

The local supports $V_{ss'}$ being mutually orthogonal imply that Bob's Hilbert space admits the following decomposition
\begin{equation}\label{block1}
    \mathcal{H}_B= \bigoplus_{s}\bigoplus_{s'}V_{ss'}=\bigoplus_{s}V_{1,s}\otimes\bigoplus_{s'}V_{2,s'}.
\end{equation}
As $\dim V_{1,s}=\dim V_{2,s'}=2$ for any $s,s'$, we can straightforwardly conclude that $\mathcal{H}_B=(\mathbbm{C}^2)_{B'_1}\otimes(\mathbbm{C}^2)_{B'_1}\otimes\mathcal{H}_{B''_{12}}$ where $\mathcal{H}_{B''_{1}}\otimes \mathcal{H}_{B''_{2}}$ for some Hilbert spaces
$\mathcal{H}_{B''_i}$.

The rest of the proof is exactly the same as step 3 in Theorem 1.2 of \cite{sarkar6}, which allows us conclude that there exist unitary transformations, $U_{i}:\mathcal{H}_B\rightarrow\mathcal{H}_B$,  such that
\begin{eqnarray}
(\mathbbm{1}_{A}\otimes U_1\otimes U_2)\rho_{AB}(\mathbbm{1}_{A}\otimes U_1^{\dagger}\otimes U_2^{\dagger})=\proj{\phi^+}_{A_1B_1'}\otimes\proj{\phi^+}_{A_2B_2'}\otimes \rho_{B_1''B_2''},
\end{eqnarray}
where $\rho_{B_1''B_2''}$ denotes Bob's auxiliary state which is separable with 
\begin{eqnarray}
    U_i=\bigoplus_sU_{s,i} \qquad i=1,2
\end{eqnarray}
and
\begin{eqnarray}
U_1\otimes U_2 \,B_0\,U_1^{\dagger}\otimes U_2^{\dagger}=A_0\otimes \mathbbm{1}_{B_1''B_2''}.
\end{eqnarray}
This completes the proof.
\end{proof}

\end{document}